\documentclass[11pt]{article}

\usepackage{bm, commath}
\usepackage{natbib}
\usepackage{caption}
\usepackage{graphicx}
\usepackage{subfig}
\usepackage{amsmath, amsfonts, amsthm}
\usepackage{float}
\usepackage{booktabs,siunitx}
\usepackage{url}
\usepackage{multirow}
\usepackage{amssymb}
\usepackage{blkarray}
\usepackage{bbm}
\usepackage{multicol}
\usepackage{authblk}
\usepackage[shortlabels]{enumitem}

\usepackage{listings}
\lstset{language=R,
    basicstyle=\small\ttfamily,
    breaklines=true,
    showstringspaces=false
}

\usepackage{bbm}
\usepackage{textcomp}
\usepackage[margin=1in]{geometry}
\usepackage{authblk}
\usepackage[ruled]{algorithm2e}
\SetKwInput{KwParam}{Parameter}
\SetAlgoCaptionLayout{centerline}

\usepackage{sectsty}
\setcounter{MaxMatrixCols}{30}
\usepackage[doublespacing]{setspace}
\linespread{1.5}

\doublespacing

\usepackage[utf8]{inputenc}
\usepackage{float}
\usepackage{tikz}
\usetikzlibrary{shapes,decorations,arrows,calc,arrows.meta,fit,positioning}

\newtheorem{definition}{Definition}
\newtheorem{corollary}{Corollary}

\newtheorem{assumption}{Assumption}
\newtheorem*{assumption*}{Assumption}
\newtheorem{side_info}{Side-Information}
\newtheorem{proposition}{Proposition}

\newtheorem{lemma}{Lemma}
\newtheorem{constraint}{Constraint}

\usepackage{mathtools}

\usepackage{xcolor}

\makeatletter
\renewcommand{\algocf@captiontext}[2]{#1\algocf@typo. \AlCapFnt{}#2} % text of caption
% default definition
\def\@algocf@capt@plain{top}
\renewcommand{\algocf@makecaption}[2]{%
  \addtolength{\hsize}{\algomargin}%
  \sbox\@tempboxa{\algocf@captiontext{#1}{#2}}%
  \ifdim\wd\@tempboxa >\hsize%     % if caption is longer than a line
  \hskip .5\algomargin%
  \parbox[t]{\hsize}{\algocf@captiontext{#1}{#2}}% then caption is not centered
  \else%
  \global\@minipagefalse%
  \hbox to\hsize{\box\@tempboxa}% else caption is centered
  \fi%
  \addtolength{\hsize}{-\algomargin}%
}
\makeatother

%%%track edits
%\usepackage{color}
%\usepackage{comment}
%\usepackage{ulem}
%\usepackage{soul}
%\newcommand\qlc[1]{{\color{red} {[[[***QL: #1 ***]]]}}}
%\newcommand\qld[1]{{\color{red} {\sout{#1}}}}
%\newcommand\ql[1]{{\color{blue} {#1}}}
%\newcommand\QLC[1]{{\color{red} {[[[***QL: #1 ***]]]}}}

\begin{document}

\sectionfont{\bfseries\large\sffamily}%

\subsectionfont{\bfseries\sffamily\normalsize}%

%\begin{center}
%\noindent
%{\sffamily\bfseries\LARGE 
%}%

%\noindent

%\end{center}

\title{Testing Biased Randomization Assumptions and Quantifying Imperfect Matching and Residual Confounding in Matched Observational Studies}

\author[1]{Kan Chen }
\author[2]{Siyu Heng }
\author[3]{Qi Long }
\author[4]{Bo Zhang \thanks{Email: {\tt bzhang3@fredhutch.org}} }

\affil[1]{Graduate Group of Applied Mathematics and Computational Science, School of Arts and Sciences, University of Pennsylvania, Philadelphia, Pennsylvania, U.S.A.}
\affil[2]{Department of Biostatistics, School of Global Public Health, New York University, New York City, New York, U.S.A.}
\affil[3]{Department of Biostatistics, Epidemiology and Informatics, Perelman School of Medicine, University of Pennsylvania, Philadelphia, Pennsylvania, U.S.A.}
\affil[4]{Vaccine and Infectious Disease Division, Fred Hutchinson Cancer Center, Seattle, Washington, U.S.A.}

\date{}

\maketitle

\noindent
\textsf{{\bf Abstract}: One central goal of design of observational studies is to embed non-experimental data into an approximate randomized controlled trial using statistical matching. Despite empirical researchers' best intention and effort to create high-quality matched samples, residual imbalance due to observed covariates not being well matched often persists. Although statistical tests have been developed to test the randomization assumption and its implications, few provide a means to quantify the level of residual confounding due to observed covariates not being well matched in matched samples. In this article, we develop two generic classes of exact statistical tests for a biased randomization assumption. One important by-product of our testing framework is a quantity called \emph{residual sensitivity value} (RSV), which provides a means to quantify the level of residual confounding due to imperfect matching of observed covariates in a matched sample. We advocate taking into account RSV in the downstream primary analysis. The proposed methodology is illustrated by re-examining a famous observational study concerning the effect of right heart catheterization (RHC) in the initial care of critically ill patients. Code implementing the method can be found in the supplementary materials.}%

\vspace{0.3 cm}
\noindent
\textsf{{\bf Keywords}: Biased randomization assumption; Classification; Clustering; Imperfect matching; Residual confounding; Statistical matching.}

\section{Introduction}
\label{sec: intro}

\subsection{Statistical matching and randomization-based outcome analysis}
In an observational study of a treatment's effect on the outcome, units involved in the study may differ systematically in their observed pretreatment covariates, thus invalidating a naive comparison between the treated and control groups. Statistical matching and subclassification are commonly used nonparametric tools to adjust for covariates. For a binary treatment, the ultimate goal of statistical matching is to embed observational data into an approximate randomized experiment by designing a treated group and a matched control (or comparison) group that are comparable in their observed covariates (\citealp{rosenbaum2002observational,rosenbaum2010design, stuart2010matching, bind2019bridging}).

One widely-used downstream outcome analysis method for matched data is randomization-based inference as if the matched data comes from a randomized controlled trial (RCT). In our view, this practice is appealing for two reasons. First, randomization inference (possibly with regression adjustment) is commonly used in analyzing RCT data; hence, using a randomization-based procedure to analyze matched data fits into the big picture of statistical matching, i.e., to embed observational data into an experiment. Second, the sensitivity analysis framework complementing the primary randomization-based outcome analysis is well understood and developed. The Rosenbaum-bounds-type sensitivity analysis (\citealp{rosenbaum2002observational,rosenbaum2010design, diprete2004}) allows the treatment assignment probability in each matched pair (or set) to deviate from the randomization probability up to a degree controlled by a parameter $\Gamma$ and outputs a bounding $p$-value under such deviation. Furthermore, randomization-based inferential methods can be applied to testing both Fisher's sharp null hypothesis (\citealp{rosenbaum2002observational, rosenbaum2010design}) and Neyman's weak null hypothesis (\citealp{wu2020randomization}). Sensitivity analysis methods have also been developed for both sharp and weak null hypotheses (\citealp{rosenbaum2002observational, rosenbaum2010design, fogarty2020studentized}).

\subsection{The randomization assumption}
In a typical randomization-based downstream outcome analysis of matched-pair data, researchers make the following \emph{Randomization Assumption} (\citealp{rosenbaum2002observational, rosenbaum2010design}):
\begin{assumption*}[Randomization Assumption, Stated Informally]
The treatment assignments across all matched pairs are assumed to be independent of each other and that the treatment is randomly assigned in each matched pair, i.e., probability of the first unit receiving treatment and the other control is the same as the second unit receiving treatment and the first control. 
\end{assumption*}

The randomization assumption (RA) entails two aspects: (i) independence among matched pairs and (ii) random treatment assignment in each matched pair. It is the single most important assumption as it enables researchers to treat observational data after statistical matching as if the data were from a randomized controlled experiment (\citealp{dehejia2002propensity, rosenbaum2002observational,rosenbaum2010design}). Researchers make analogous assumptions when analyzing observational data after 1-to-$k$ matching, full matching (\citealp{heng2019increasing,zhang2022statistical}), and matched-pair clustered designs (\citealp{hansen2014clustered, zhang2020bridging}). 

The RA clearly holds by design in an RCT. It also holds if matched pairs are independent of each other and two units in the same pair have the same propensity score. In practice, the true propensity score for each unit is at best known up to estimation uncertainty if one happens to correctly specify the propensity score model and at worst unfathomable. Moreover, the matching algorithm may create weak dependence among matched pairs or sets so that the randomization assumption is at best an approximation of the reality.

Researchers perform randomization-based outcome analysis for assorted matched data, not restricted to propensity-score-matched (PSM) data. In fact, as shown by \citet{rosenbaum1985constructing}, propensity score matching often underperforms more sophisticated matching algorithms that combine metric-based and propensity-score-based matching in balancing observed covariates. Indeed, many modern statistical matching algorithms have moved beyond PSM; some notable examples include network-flow-based optimal matching and its many variants (\citealp{rosenbaum1989optimal, rosenbaum2002observational, rosenbaum2010design,zhang2021matching}, among others), coarsened exact matching (\citealp{iacus2011multivariate}), mixed-integer-programming-based algorithms (\citealp{zubizarreta2012using}), and genetic matching algorithm (\citealp{diamond2013genetic}), among others.

\subsection{Justifications for randomization inference: Informal and formal diagnostics}
\label{subsec: intro diagnostics of RA}
To justify using a randomization-based inferential procedure, researchers often perform informal diagnostics based on metrics like the standardized mean differences (SMDs) to examine the covariate balance after matching (\citealp{silber2001multivariate}, \citealp{franklin2014metrics}, \citealp{austin2015moving}) or formal statistical procedures to test the equality of covariate distributions in the treated and matched control groups (e.g., \citet{rosenbaum2005exact}'s crossmatch test). However, there is a gap between equality of covariate distributions and the RA: the downstream randomization inference relies solely on the RA and need not assume sampling covariates or potential outcomes from some superpopulation. A more detailed review and related discussion can be found in the Supplementary Material A.

More recently, \citet{gagnon2019classification} developed the Classification Permutation Test (CPT) that can be adapted to testing the RA as follows. First, train a classifier $\widehat{f}$ using any classification tools (e.g., logistic regression, support vector machine, random forests, or an ensemble of them) with treatment status $\mathbf{Z}$ as the label and observed covariates $\mathbf{X}$ as predictors, and predict treatment indicators for all matched-pair data using $\widehat{f}$. Denote the predicted treatment indicators by $(\widehat{Z}_{11}, \dots, \widehat{Z}_{I2})$ and define the test statistic $T=\sum_{i=1}^{I}\sum_{j=1}^{2} Z_{ij}\widehat{Z}_{ij}$. Next, for $m = 1, \cdots, \text{MC}$, randomly permute two treatment indicators within each pair, denoted as $(Z_{11}^{(m)}, \dots, Z_{I2}^{(m)})$, re-train the classifier $\widehat{f}^{(m)}$ using the same covariates data $\mathbf{X}$ but permuted treatment indicators $(Z_{11}^{(m)}, \dots, Z_{I2}^{(m)})$, and then re-predict treatment assignments using $\widehat{f}^{(m)}$. Denote by $(\widehat{Z}_{11}^{(m)}, \dots, \widehat{Z}_{I2}^{(m)})$ the prediction at iteration $m$. To test the randomization assumption, it then suffices to compare the test statistic $T$ to the null distribution generated by $\{T^{(m)}=\sum_{i=1}^{I}\sum_{j=1}^{2} Z_{ij}^{(m)}\widehat{Z}_{ij}^{(m)},~m = 1, \cdots, \text{MC}\}$. In this way, the CPT procedure yields an exact $p$-value for testing the randomization assumption. Similar Fisherian-style permutation strategies are also used in \citet{branson2018randomization} and \citet{branson2020evaluating} to deliver an exact test for the RA.

\subsection{Moving beyond the randomization assumption: Testing a biased randomization assumption and quantifying residual confounding}
A powerful test of the RA is desirable as it closely examines a most important premise for downstream statistical inference; however, a test of the RA by itself may not be comprehensive enough to capture the full picture of the study design. Take as an example \citet{gagnon2019classification}'s re-analysis of \citet{heller2010using}'s matched-pair data. \citet[Section 4.4]{gagnon2019classification} rejected the RA at the $0.05$ level using the CPT procedure and concluded that ``the covariates can predict the treatment assignment better than under random assignment." As one carefully examines the covariate balance of \citet{heller2010using}'s matched-pair data, however, no two-sample t-test, Wilcoxon signed rank test, or Kolmogorow-Smirnov test is statistically significant at the $0.1$ level for any of the observed covariates in the treated and matched control groups, suggesting that there is likely to be little residual confounding due to imperfect matching on observed covariates. It is also unclear to what extent a minor deviation from the RA would affect the downstream outcome analysis. From a practical perspective, with a dataset as well-matched as that in \citet{heller2010using}, few empirical researchers would re-do the sometimes computationally-intensive statistical matching. A powerful test of the RA may thus have an unintended consequence of discouraging empirical researchers from adopting and reporting such formal diagnostics. In our opinion, what currently are lacking in the literature are three-fold: (i) a measure to quantify the level of residual imbalance in observed covariates, an arguably more important practical matter than testing and rejecting the RA by itself, (ii) a means to systematically incorporate ``recalcitrant" residual covariate imbalance, if there is any, into the outcome analysis, and (iii) simulation results that relate the quality of statistical matching to the performance of outcome analysis.

\subsection{Our contribution}
\label{subsec: intro contribution}
This article aims to start filling in these gaps. We propose two generic classes of exact statistical tests for a relaxed version of the RA, termed the biased randomization assumption (biased RA). Our first proposal is based on a sample-splitting version of the CPT (termed SS-CPT). Our second proposal, clustering-based test (termed CBT), takes a different perspective and recasts the testing problem as a clustering problem with side information. Next, we propose to use a one-number summary statistic, termed the \emph{residual sensitivity value (RSV)}, to \emph{quantify} the deviation from the RA due to residual imbalance in observed covariates by inverting a nested sequence of statistical tests of the biased RA. We then describe how to incorporate the RSV into the downstream, randomization-based outcome analysis. Via extensive simulation studies, we systematically compare the proposed tests, and examine how the quality of statistical matching meaningfully affects the statistical performance of the randomization-based outcome analysis; in particular, our simulation results suggest that when the RA cannot be rejected by proposed tests, the downstream randomization-based inference has desirable statistical performance in the sense that the Hodges-Lehmann point estimate (\citealp[Chapter 2]{rosenbaum2002observational}) has small mean squared error. We also investigate the performance of statistical inference after incorporating the RSVs via simulation studies.

The article is organized as follows. Section \ref{sec: RA and biased RA} formally defines the biased RA. Section \ref{sec: SS-CPT} and \ref{sec: CBT} introduce two generic classes of statistical tests for the biased RA. Section \ref{sec: simulation} presents simulation results. Section \ref{sec: application RHC} considers a case study. We conclude with a discussion in Section \ref{sec: discussion}. 

\section{Randomization and biased randomization assumption}
\label{sec: RA and biased RA}
We consider the setting of a typical matched cohort study. Suppose that the study has access to $N_t$ treated units and a large reservoir of $N_c$ control units so that there are $N_t + N_c = N$ units in total. Without loss of generality, we assume $N_c > N_t$ (if not, switch the role of treated and control units) and $N_c$ is often much larger than $N_t$. Each of the $N$ units is associated with a treatment indicator $Z_n$ and a $q$-dimensional vector of observed covariates $\mathbf{x}_n$, $n = 1, \dots, N$. Researchers match each treated unit with a control unit using any statistical matching procedure $\textsf{M}$ and produce $I \leq N_t$ matched pairs of two units, one treated and the other control. We use $ij$, $i = 1, \dots, I,~j = 1, 2$, to index the $j$-th unit in the $i$-th matched pair. Let $\mathbf{x}_{ij}$ denote unit $ij$'s observed covariates and $Z_{ij}$ its treatment status so that the unit with $Z_{ij} = 1$ is the treated unit and with $Z_{ij} = 0$ the control unit. Finally, we collect the observed covariates information of $2I$ matched units in the matrix $\mathbf{X}$. We assume no unmeasured confounders.

The randomization assumption in a downstream, finite-sample, randomization-based outcome analysis can be formally stated as follows:
\begin{assumption}[Randomization Assumption in Matched-Pair Studies, Stated Formally]\label{assum: randomization inference}
Treatment assignments across matched pairs are assumed to be independent of each other, with
\begin{equation}
\label{eqn: assumption 1}
    P(Z_{i1}=1, Z_{i2}=0\mid \mathbf{X}, \textsf{M})=P(Z_{i1}=0, Z_{i2}=1 \mid \mathbf{X}, \textsf{M})=1/2,\quad \text{for}~~i = 1, \cdots, I.
\end{equation}
\end{assumption}

Statistical matching can largely remove overt bias from observed covariates (\citealp{rosenbaum2002observational,rosenbaum2010design}); however, some residual confounding from observed covariates may persist in the matched sample due to imperfect matching, and this motivates a biased randomization assumption:
\begin{assumption}[Biased Randomization Assumption]\label{assum: biased randomization inference}
Treatment assignments across matched pairs are assumed to be independent of each other, with
\begin{equation}
\label{eqn: assumption biased RA}
\Gamma^{-1} \leq \frac{ P(Z_{i1}=1, Z_{i2}=0\mid \mathbf{X}, \textsf{M})}{P(Z_{i1}=0, Z_{i2}=1 \mid \mathbf{X}, \textsf{M})} \leq \Gamma,\quad \text{for}~~i= 1, \dots, I, ~\text{and some}~\Gamma \in [1, \infty).
\end{equation}
\end{assumption}

Assumption \ref{assum: biased randomization inference} is the basis for the Rosenbaum-bounds sensitivity analysis framework (\citealp{rosenbaum2002observational, rosenbaum2010design}). It uses one parameter $\Gamma$ to control the maximum degree to which the residual confounding biases the treatment assignment probability in each matched pair. We stress again that we assume no unmeasured confounding in this article and are only interested in assessing residual confounding from observed covariates. In the presence of unmeasured confounding, $\Gamma$ can be \emph{arbitrarily} large, and Assumption \ref{assum: biased randomization inference} becomes untestable.

\section{Sample-splitting classification permutation test (SS-CPT)}
\label{sec: SS-CPT}
Our first proposal to test Assumption \ref{assum: biased randomization inference} is a simple modification of the CPT. First, randomly split $I$ matched pairs into two non-overlapping index sets $\mathcal{I}^{(1)}$ and $\mathcal{I}^{(2)}$, and train a classifier $\hat{f}_{1}$ using observed covariates $\mathbf{X}^{(1)}=\{ X_{ij}, i \in \mathcal{I}^{(1)}, j=1,2\}$ as predictors and treatment status $\mathbf{Z}^{(1)}=\{ Z_{ij}, i \in \mathcal{I}^{(1)}, j=1,2\}$ as labels. Next, apply the classifier $\hat{f}_{1}$ to the observed covariates $\mathbf{X}^{(2)}=\{ X_{ij}, i \in \mathcal{I}^{(2)}, j=1,2\}$ and let $\hat{\mathbf{f}}_{1\rightarrow2}=\{\hat{f}(\mathbf{x}_{ij}), i\in \mathcal{I}^{(2)}, j=1,2\} \in [0,1]^{|\mathcal{I}^{(2)}|}$ denote the predicted scores. Lastly, let $g_{ij}: [0,1]^{|\mathcal{I}^{(2)}|} \rightarrow \mathbb{R}$ denote a generic function and form the test statistic $T_{1\rightarrow 2}=\sum_{i\in \mathcal{I}^{(2)}} \sum_{j=1}^{2}Z_{ij}g_{ij}(\hat{\mathbf{f}}_{1\rightarrow2})$, which can then be used to test the biased RA restricted to matched pairs $i \in \mathcal{I}^{(2)}$ for a fixed $\Gamma$ value according to Proposition \ref{prop: null distribution biased for CPT} below.

\begin{proposition}[Bounding $p$-value]\label{prop: null distribution biased for CPT}
Let $T$ be a test statistic of the form $T=\sum_{i=1}^{I}\sum_{j=1}^{2}Z_{ij}q_{ij}$, where $q_{ij}$ is some fixed score based on the observed covariates $\mathbf{X}=\{\mathbf{x}_{ij}: i=1,\dots, I, j = 1, 2\}$. For $i=1,\dots,I$, define $\overline{T}_{\Gamma i}$ to be independent random variables taking the value $\overline{\overline{q}}_{i}= \max\{q_{i1}, q_{i2}\}$ with probability $\Gamma/(1+\Gamma)$ and the value $\overline{q}_{i}=\min\{q_{i1}, q_{i2}\}$ with probability $1/(1+\Gamma)$. Under Assumption~\ref{assum: biased randomization inference} with $\Gamma \geq 1$, we have for any $t$,
\begin{align*}
P\left(T \geq t \mid \mathbf{X}, \textsf{M}\right)\leq P\Big(\sum_{i=1}^{I}\overline{T}_{\Gamma i} \geq t \Big)\overset{\Delta}{=}\overline{p}_{\Gamma,\textsf{exact}}\simeq 1-\Phi\left(\frac{t-\sum_{i=1}^{I}(\frac{\Gamma}{1+\Gamma}\cdot\overline{\overline{q}}_{i}+\frac{1}{1 + \Gamma }\overline{q}_{i})}{\sqrt{\sum_{i=1}^{I}\frac{\Gamma}{(1+\Gamma)^{2}}(\overline{\overline{q}}_{i}-\overline{q}_{i})^{2}} }\right)\overset{\Delta}{=}\overline{p}_{\Gamma,\textsf{approx}},
\end{align*}
where $\Phi(\cdot )$ is the distribution function of standard normal distribution, and ``$\simeq$" denotes that two sequences are asymptotically equal as $I \rightarrow \infty$.
\end{proposition}

\begin{proof}
All proofs in the article can be found in the Supplementary Material B.
\end{proof}

In Proposition \ref{prop: null distribution biased for CPT}, let $I = |\mathcal{I}^{(2)}|$ and $q_{ij} = g_{ij}(\hat{\mathbf{f}}_{1\rightarrow2})$, and we can then calculate the bounding $p$-value $p_{1\rightarrow 2, \Gamma}$ under the biased RA restricted to $i \in \mathcal{I}^{(2)}$ for a fixed $\Gamma$ value. This bounding $p$-value can be obtained by calculating the tail probability of the random variable $\sum_{i=1}^{I}\overline{T}_{\Gamma i}$ via Monte Carlo or via the Normal approximation. Next, flip the role of $\mathcal{I}^{(1)}$ and $\mathcal{I}^{(2)}$, form a second test statistic $T_{2\rightarrow 1}=\sum_{i\in \mathcal{I}^{(1)}} \sum_{j=1}^{2}Z_{ij}h_{ij}(\hat{\mathbf{f}}_{2\rightarrow 1})$ where $h_{ij}: [0,1]^{|\mathcal{I}^{(1)}|} \rightarrow \mathbb{R}$ is another generic function, and obtain $p_{2\rightarrow 1, \Gamma}$ in a similar way. Finally, we reject Assumption \ref{assum: biased randomization inference} with a prespecified $\Gamma$ at the level $\alpha$ when $\overline{p}_{\Gamma} = \min\{ p_{1\rightarrow 2, \Gamma}, p_{2\rightarrow 1, \Gamma}\}<\alpha/2$.  Algorithm \ref{alg: SS-CPT} summarizes this sample-splitting variant of the CPT, which we refer to as SS-CPT. Sample-splitting is an essential element of Algorithm \ref{alg: SS-CPT} because scores $q_{ij}$ in Proposition \ref{prop: null distribution biased for CPT} are held fixed by sample-splitting; in contrary, scores in the vanilla version of the CPT reviewed in Section \ref{subsec: intro diagnostics of RA} depend on the permuted treatment labels and change at each and every permutation. It is unclear how to derive a bounding $p$-value and test the biased RA using the vanilla CPT.

%because rejecting the biased RA restricted to either $\mathcal{I}^{(1)}$ or $\mathcal{I}^{(2)}$ implies rejecting the biased RA for the entire matched sample.

What are some sensible choices of $g_{ij}$ (and similarly $h_{ij}$) in Algorithm \ref{alg: SS-CPT}? There are at least three choices. First, we may let $g_{ij}(\hat{\mathbf{f}}_{1\rightarrow2})=\mathbf{1}\{\hat{f}(\mathbf{x}_{ij})>\hat{f}(\mathbf{x}_{ij^{\prime}})\}$, and assign the predicted ``treated" label to the unit with the higher predicted score and ``control" label to the other within each matched pair. In this way, the test statistic $T$ is essentially a prediction accuracy measure described in \citet{gagnon2019classification}. Second, we may directly take the predicted score as $q_{ij}$ by setting $g_{ij}(\hat{\mathbf{f}}_{1\rightarrow2})=\hat{f}(\mathbf{x}_{ij})$. Third, we may take the rank of the predicted score for the unit $ij$ among the predicted scores of all study units as $q_{ij}$ by setting $g_{ij}(\hat{\mathbf{f}}_{1\rightarrow2})=\sum_{i^{\prime}=1}^{I}\sum_{j^{\prime}=1}^{2} \mathbf{1}(\hat{f}(\mathbf{x}_{ij})\geq \hat{f}(\mathbf{x}_{i^{\prime}j^{\prime}}))$.

%\begin{remark}\label{remark: choice of g_ij}
%We here give some sensible choices of $g_{ij}$ (and similarly for $h_{ij}$) in Algorithm 1: 1) $g_{ij}(\hat{\mathbf{f}}_{1\rightarrow2})=\mathbf{1}(\hat{f}(\mathbf{x}_{ij})>\hat{f}(\mathbf{x}_{ij^{\prime}}))$. That is, in each matched pair, among the two matched subjects, we assign the label 1 to the one with the higher predicted score. 2) $g_{ij}(\hat{\mathbf{f}}_{1\rightarrow2})=\hat{f}(\mathbf{x}_{ij})$. That is, we directly take the predicted score as the score of the test. 3) $g_{ij}(\hat{\mathbf{f}}_{1\rightarrow2})=\sum_{i^{\prime}=1}^{I}\sum_{j^{\prime}=1}^{2} \mathbf{1}(\hat{f}(\mathbf{x}_{ij})\geq \hat{f}(\mathbf{x}_{i^{\prime}j^{\prime}}))$. We take the rank of the predicted score for the subject $ij$ among the predicted scores of all the study subjects as the score of the test. 
%\end{remark}

%\begin{remark}\label{remark: CPT cannot test gamma}
%We here discuss why directly applying the CPT may not be able to test Assumption~\ref{assum: biased randomization inference}. Note that in each permutation run, the score of the test $T^{(m)}$ (i.e., $\widehat{Z}_{ij}^{(m)}$) depends on the permuted labels $\{Z_{ij}^{(m)}: i=1,\dots, I, j=1,2\}$ in each permutation run $m$, which differs across different $m$. However, in Proposition~\ref{prop: null distribution biased for CPT}, the score $q_{ij}$ does not depend on the permuted labels of the treatment. Otherwise, Proposition~\ref{prop: null distribution biased for CPT} may not hold.
%\end{remark}

\begin{algorithm}[ht]

\caption{Pseudo Algorithm for Testing Assumption~\ref{assum: biased randomization inference} using SS-CPT }
\label{alg: SS-CPT}

\textbf{Input:} $\mathbf{X}=\{\mathbf{x}_{ij}: i=1,\dots, I, j = 1, 2\}$ and $\mathbf{Z}=\{Z_{ij}: i=1,\dots, I, j = 1, 2\}$. 

\textbf{1.} Randomly split the matched sample into two parts: $\mathcal{I}^{(1)}$ and $\mathcal{I}^{(2)}$ such that $\mathcal{I}^{(1)} \cup \mathcal{I}^{(2)}=\{1,\dots, I\}$ and $\mathcal{I}^{(1)} \cap \mathcal{I}^{(2)}=\emptyset$. 

\textsf{2.} Train a classifier $\hat{f}_{1}$ with covariates $\mathbf{X}^{(1)}=\{ X_{ij}, i \in \mathcal{I}^{(1)}, j=1,2\}$ as predictors and treatment status $\mathbf{Z}^{(1)}=\{ Z_{ij}, i \in \mathcal{I}^{(1)}, j=1,2\}$ as labels. %Let $\hat{f}_{1}(\mathbf{x})$ denote the predicted score (e.g., the predicted probability of receiving the treatment) given the covariates $\mathbf{x}$ and $\hat{f}_{1}$. 

\textsf{3.} Train a classifier $\hat{f}_{2}$ with covariates $\mathbf{X}^{(2)}=\{ X_{ij}, i \in \mathcal{I}^{(2)}, j=1,2\}$ as predictors and treatment status $\mathbf{Z}^{(2)}=\{ Z_{ij}, i \in \mathcal{I}^{(2)}, j=1,2\}$ as labels.

%We then train a classifier $\hat{f}_{2}$ based on $\mathcal{I}^{(2)}$, i.e., with the treatment indicators $\mathbf{Z}^{(2)}=\{ Z_{ij}, i \in \mathcal{I}^{(2)}, j=1,2\}$ as the labels and the covariates $\mathbf{X}^{(2)}=\{ X_{ij}, i \in \mathcal{I}^{(2)}, j=1,2\}$ as the predictors. Let $\hat{f}_{2}(\mathbf{x})$ denote the predicted score given the observed covariates $\mathbf{x}$ and $\hat{f}_{2}$. 

\textsf{4.} Let $\hat{\mathbf{f}}_{1\rightarrow2}=\{\hat{f}_1(\mathbf{x}_{ij}), i\in \mathcal{I}^{(2)}, j=1,2\}$ denote the predicted scores for $\mathcal{I}^{(2)}$ based on the classifier $\hat{f}_{1}$. Define the test statistic $T_{1\rightarrow 2}=\sum_{i\in \mathcal{I}^{(2)} } \sum_{j=1}^{2}Z_{ij}g_{ij}(\hat{\mathbf{f}}_{1\rightarrow2})$, and calculate the worst-case $p$-value under the biased RA with $\Gamma$ according to Proposition~\ref{prop: null distribution biased for CPT}. Denote this bounding $p$-value by $p_{1\rightarrow 2, \Gamma}$.

\textsf{5.} Let $\hat{\mathbf{f}}_{2\rightarrow1}=\{\hat{f}_2(\mathbf{x}_{ij}), i\in \mathcal{I}^{(1)}, j=1,2\}$ denote the predicted scores for $\mathcal{I}^{(1)}$ based on the classifier $\hat{f}_{2}$. Define the test statistic $T_{2\rightarrow 1}=\sum_{i\in \mathcal{I}^{(1)} } \sum_{j=1}^{2}Z_{ij}h_{ij}(\hat{\mathbf{f}}_{2\rightarrow 1})$, and calculate the worst-case $p$-value under the biased RA with $\Gamma$ according to Proposition~\ref{prop: null distribution biased for CPT}. Denote this bounding $p$-value by $p_{2\rightarrow 1, \Gamma}$.

\textbf{Output:} Reject Assumption~\ref{assum: biased randomization inference} with the prespecified $\Gamma$ if and only if $\min\{ p_{1\rightarrow 2, \Gamma}, p_{2\rightarrow 1, \Gamma}\}<\alpha/2$.

\end{algorithm}

\section{Clustering-based test (CBT)}
\label{sec: CBT}
\subsection{A clustering-based framework}
If a clustering algorithm $\textsf{ALG}$ can correctly group matched-pair data into a treated cluster and a control cluster solely based on the observed covariates, then this provides evidence that data is not well-matched or well-overlapped, and some degree of deviation from the RA persists. This intuition is illustrated in Figure \ref{fig: a schematic plot}. Definition \ref{def: appropriate algorithm} states that an algorithm $\textsf{ALG}$ is appropriate for testing Assumption \ref{assum: biased randomization inference} if it leverages no more information than what equation \eqref{eqn: assumption biased RA} conditions upon, and outputs ``educated guesses" of cluster membership that respect the matched-pair structure. In particular, Definition \ref{def: appropriate algorithm} rules out using the treatment status (up to $Z_{i1} + Z_{i2} = 1$) or outcome data. 

\begin{figure}[ht]
   \centering
     \subfloat[Matched data]{\includegraphics[width = 0.49\columnwidth]{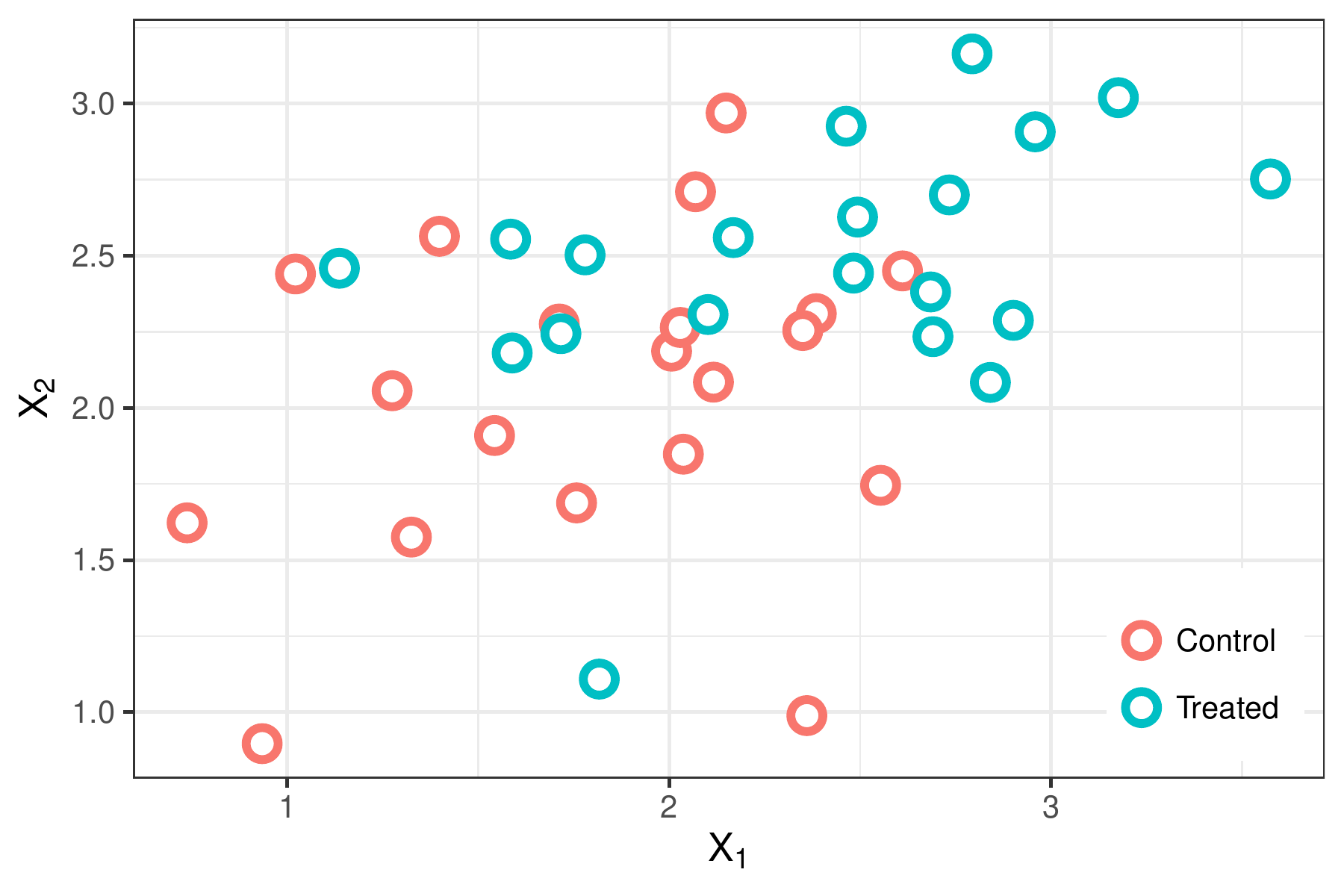}}
     \subfloat[Anonymized matched data]{\includegraphics[width = 0.49\columnwidth]{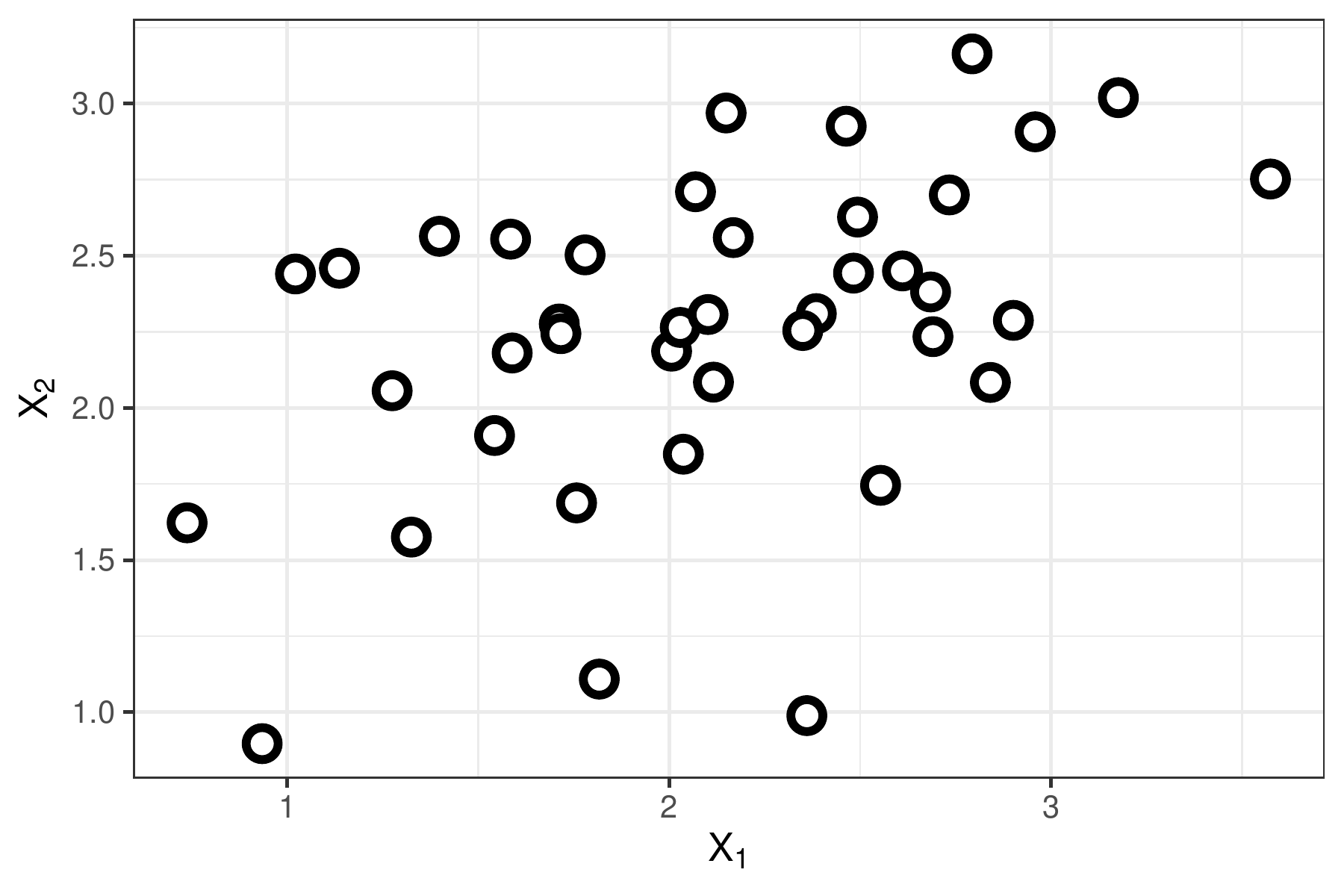}}\\
     \subfloat[Anonymized matched data after clustering]{\includegraphics[width = 0.49\columnwidth]{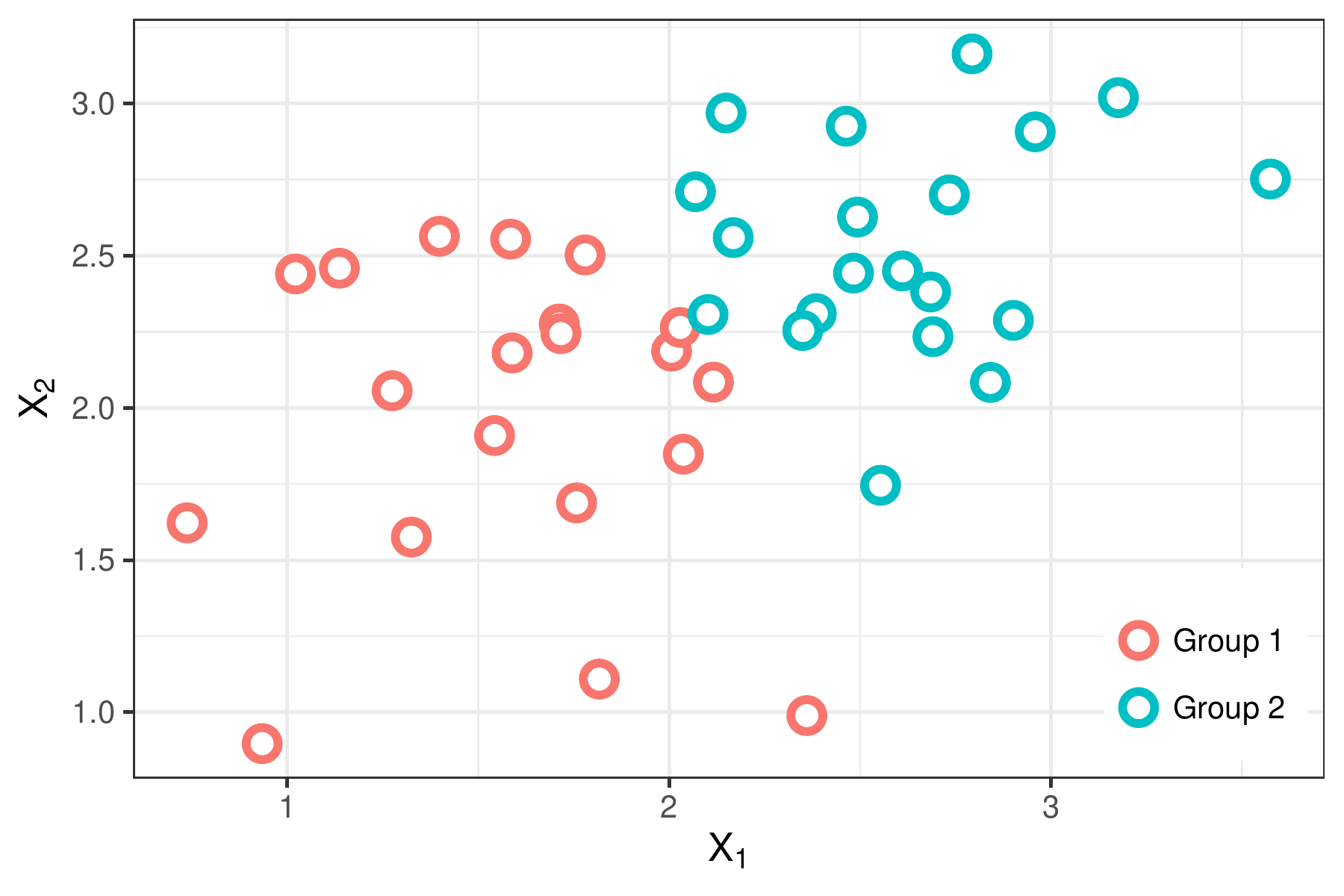}}
     \caption{\small A schematic plot with two-dimensional observed covariates $(X_1, X_2)$. Top left panel: 20 treated and 20 control subjects after matching. Top right panel: the same data with treatment labels anonymized. This is the input into the clustering algorithm. Bottom panel: anonymized data partitioned into two groups (Group 1 versus Group 2) based on a clustering algorithm. This partition is then compared to the true labels (top left panel) and high clustering accuracy is used as evidence testing the RA and biased RA.}
\label{fig: a schematic plot}
\end{figure}

\begin{definition}[Appropriateness]\label{def: appropriate algorithm}
Let $\mathcal{F} = \{\mathbf{X}, Z_{i1} + Z_{i2} = 1,\forall i = 1, \cdots, I\}$ be the collection of information. An algorithm $\textsf{ALG}$ is appropriate if it takes as input $\mathcal{F}$ and outputs a partition of $2I$ matched units into two equal-sized groups (each with size $I$): $\prod_{1}=\{ij_{1}: i=1,\dots,I\}$ and $\prod_{2}=\{ij_{2}: i=1,\dots,I\}$, subject to the constraint that there is one and only one treated unit in each matched pair $i$, i.e., we have either $i1 \in \prod_{1}$ and $i2 \in \prod_{2}$ or $i2 \in \prod_{1}$ and $i1 \in \prod_{2}$.
\end{definition}

Let $\textsf{ALG}$ be an appropriate algorithm satisfying Definition \ref{def: appropriate algorithm}. Proposition \ref{prop: null distribution biased} specifies the null distribution of the number of ``correctly guessed" cluster membership returned by $\textsf{ALG}$ under the biased RA for a fixed $\Gamma$. The null distribution under the RA is a special case of Proposition \ref{prop: null distribution biased} by setting $\Gamma = 1$ and presented as Corollary \ref{cor: null distribution} for easy reference.

\begin{proposition}[Bounding Two-Sided $p$-value]\label{prop: null distribution biased}
Let $\mathcal{T} = \{ij: Z_{ij}=1, i=1,\dots, I, j=1, 2\}$ be the set of indices corresponding to the treated units in each matched pair. Let $\Pi_1$ and $\Pi_2$ be the output from an appropriate algorithm $\textsf{ALG}$. For $i=1,\dots,I$, define $\widetilde{T}_{\Gamma i}$ to be independent random variables taking the value $1$ with probability $\Gamma/(1+\Gamma)$ and the value $0$ with probability $1/(1+\Gamma)$. Under Assumption~\ref{assum: biased randomization inference} with $\Gamma \geq 1$, we have for any $t$,
\begin{align*}
 &\quad P\left((|\Pi_{1} \cap  \mathcal{T}|-I/2)^{2} \geq (t-I/2)^{2} \mid \mathbf{X}, \textsf{M}\right)\\
 &\leq P\Big(\sum_{i=1}^{I}\widetilde{T}_{\Gamma i} \geq |t-I/2|+I/2 \Big)+P\Big(\sum_{i=1}^{I}(1-\widetilde{T}_{\Gamma i}) \leq -|t-I/2|+I/2 \Big) ~\overset{\Delta}{=}~\overline{p}_{\Gamma,\textsf{exact}} \\
 &\simeq 1-\Phi\left(\frac{|t-I/2|+I/2-\left\{\Gamma/(1 + \Gamma)\right\}\cdot I }{ \left\{\Gamma/(1+\Gamma)^2 \cdot I\right\}^{1/2}}\right) + \Phi\left(\frac{-|t-I/2|+I/2-\left\{1/(1 + \Gamma)\right\}\cdot I }{ \left\{\Gamma/(1+\Gamma)^2 \cdot I\right\}^{1/2}}\right)\\
 &\overset{\Delta}{=}\overline{p}_{\Gamma,\textsf{approx}}.
\end{align*}
%where $|\cdot|$ denotes the cardinality of a set, $\Phi(\cdot )$ is the distribution function of standard normal distribution, and ``$\simeq$" denotes that two sequences are asymptotically equal as $I \rightarrow \infty$.
\end{proposition}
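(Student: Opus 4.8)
The plan is to reduce the test statistic $|\Pi_1 \cap \mathcal{T}|$ to a sum of independent Bernoulli indicators whose success probabilities are pinned into the interval $[1/(1+\Gamma),\,\Gamma/(1+\Gamma)]$ by Assumption~\ref{assum: biased randomization inference}, and then to control the two tails by stochastic dominance against the extremal Bernoulli laws. First I would, for each pair $i$, let $a_i \in \{i1, i2\}$ denote the single unit that \textsf{ALG} places into $\Pi_1$, and define $W_i = \mathbbm{1}\{Z_{a_i} = 1\}$, so that $|\Pi_1 \cap \mathcal{T}| = \sum_{i=1}^I W_i$. By Definition~\ref{def: appropriate algorithm} the map from $\mathcal{F}$ to the partition uses no information about the actual treatment assignment beyond $Z_{i1}+Z_{i2}=1$; hence, conditional on $\mathcal{F}$ (and on any auxiliary randomization used for equiprobable tie-breaking, which is independent of treatment), the choice of $a_i$ is frozen while $Z_{a_i}$ remains distributed according to $p_i := P(Z_{i1}=1 \mid \mathbf{X}, Z_{i1}+Z_{i2}=1)$. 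Thus $W_i$ is Bernoulli with success probability $q_i$ equal to either $p_i$ or $1-p_i$, and, because treatment assignments are independent across pairs under Assumption~\ref{assum: biased randomization inference}, the $W_i$ are conditionally independent given $\mathcal{F}$.

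The next step is to translate the odds constraint into a bound on $q_i$. Assumption~\ref{assum: biased randomization inference} states $\Gamma^{-1} \le p_i/(1-p_i) \le \Gamma$, which rearranges to $1/(1+\Gamma) \le p_i \le \Gamma/(1+\Gamma)$; since $q_i \in \{p_i, 1-p_i\}$, the same two-sided bound holds for $q_i$. I would then set up a monotone coupling conditional on $\mathcal{F}$: drawing a shared $\text{Unif}(0,1)$ variable for each pair and noting $q_i \le \Gamma/(1+\Gamma)$, each $W_i$ can be coupled to $\widetilde{T}_{\Gamma i}$ so that $W_i \le \widetilde{T}_{\Gamma i}$ almost surely, whence $T := \sum_i W_i \le S := \sum_i \widetilde{T}_{\Gamma i}$ a.s.\ and $P(T \ge c \mid \mathbf{X}, Z_{i1}+Z_{i2}=1) \le P(S \ge c)$ for every $c$. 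Symmetrically, because $q_i \ge 1/(1+\Gamma)$, each $1 - W_i$ has success probability at most $\Gamma/(1+\Gamma)$ and couples below $1 - \widetilde{T}_{\Gamma i}$, giving $P(T \le c \mid \mathbf{X}, Z_{i1}+Z_{i2}=1) \le P(\sum_i(1-\widetilde{T}_{\Gamma i}) \le c)$.

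To finish the exact bound, I would decompose the two-sided event $\{(|\Pi_1 \cap \mathcal{T}| - I/2)^2 \ge (t - I/2)^2\}$ as $\{T \ge I/2 + |t - I/2|\} \cup \{T \le I/2 - |t - I/2|\}$, apply the union bound, and insert the two tail estimates with $c = I/2 + |t-I/2|$ and $c = I/2 - |t-I/2|$ respectively; this yields precisely $P(S \ge |t-I/2|+I/2) + P(\sum_i(1-\widetilde{T}_{\Gamma i}) \le -|t-I/2|+I/2)$. The asymptotic ``$\simeq$'' line then follows from the central limit theorem applied to the i.i.d.\ sums: $S$ has mean $\{\Gamma/(1+\Gamma)\}I$ and variance $\{\Gamma/(1+\Gamma)^2\}I$, while $\sum_i(1-\widetilde{T}_{\Gamma i})$ has mean $\{1/(1+\Gamma)\}I$ and the same variance, and standardizing each tail produces the stated $1 - \Phi(\cdot) + \Phi(\cdot)$ expression.

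The step I expect to be the crux is the reduction in the first paragraph, namely arguing cleanly from Definition~\ref{def: appropriate algorithm} that conditioning on $\mathcal{F}$ freezes $a_i$ while leaving $Z_{a_i}$ distributed according to $p_i$, so that $W_i$ is genuinely Bernoulli$(q_i)$ with $q_i \in \{p_i, 1-p_i\}$ and the $W_i$ inherit cross-pair independence from Assumption~\ref{assum: biased randomization inference}. Once that measurability and independence bookkeeping is in place, the remaining stochastic-dominance and CLT steps are routine; the only mild subtlety is checking that the extremal configurations $q_i \equiv \Gamma/(1+\Gamma)$ (upper tail) and $q_i \equiv 1/(1+\Gamma)$ (lower tail) are simultaneously the worst cases, which the coupling makes transparent since the bound holds uniformly over all admissible $q_i$ and depends on $\mathbf{X}$ only through the fact that each $q_i$ lies in the stated interval.
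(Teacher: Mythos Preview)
Your approach is essentially the paper's own: represent $|\Pi_1\cap\mathcal{T}|$ as $\sum_i W_i$ with $W_i=\mathbf{1}\{Z_{ij_1}=1\}$ conditionally independent Bernoulli variables whose success probabilities lie in $[1/(1+\Gamma),\Gamma/(1+\Gamma)]$, bound the upper and lower tails separately by the extremal Binomials, then decompose the two-sided event and apply the CLT. The paper packages the two one-sided bounds as a separate proposition and then invokes it, while you do everything in one pass and add the explicit coupling and the measurability discussion for $a_i$; substantively the arguments coincide.

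One small slip to fix: in the lower-tail step you write that ``$1-W_i$ \dots\ couples below $1-\widetilde{T}_{\Gamma i}$.'' That direction is false, since $1-W_i$ has success probability $1-q_i$ which can exceed the success probability $1/(1+\Gamma)$ of $1-\widetilde{T}_{\Gamma i}$. The correct statement (and the one that yields your conclusion $P(T\le c\mid\mathcal{F})\le P(\sum_i(1-\widetilde{T}_{\Gamma i})\le c)$) is that $W_i$, having success probability $q_i\ge 1/(1+\Gamma)$, stochastically dominates $1-\widetilde{T}_{\Gamma i}$, so $T$ dominates $\sum_i(1-\widetilde{T}_{\Gamma i})$. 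With that correction your argument goes through.
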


%\begin{corollary}\label{cor: null distribution}
%Let $\mathcal{T}=\{ij: Z_{ij}=1, i=1,\dots, I, j=1, 2\}$ be the set of indices corresponding to the treated unit in each matched pair. Let $\Pi_1$ be the output from an appropriate algorithm $\textsf{ALG}$ in Definition \ref{def: appropriate algorithm}. Under Assumption~\ref{assum: randomization inference}, we have $|\Pi_{1} \cap  \mathcal{T}|\sim \text{Binomial}\ (I, 1/2)$, where $|\cdot|$ denotes the cardinality of a set.
%\end{corollary}

\begin{corollary}\label{cor: null distribution}
Using notation in Proposition \ref{prop: null distribution biased} and under Assumption~\ref{assum: randomization inference}, we have $|\Pi_{1} \cap  \mathcal{T}|\sim \text{Binomial}\ (I, 1/2)$.
\end{corollary}

According to Proposition \ref{prop: null distribution biased}, the bias RA can be tested by comparing the number of ``correctly inferred" cluster membership to a Binomial distribution and calculating appropriate tail probabilities. Both exact $p$-value ($\overline{p}_{\Gamma,\textsf{exact}}$) and that based on normal approximation ($\overline{p}_{\Gamma,\textsf{approx}}$) can be obtained efficiently.

\subsection{Implementation of CBT: Clustering with side information}
We propose two simple algorithms for CBT, one based on a variant of the $2$-means clustering algorithm and the other based on fitting a two-component mixture model. Both algorithms perform the clustering task with the following side information: one and only one unit in each matched pair is the treated unit.

\begin{side_info}[Matched-Set-Structure Side-Information]\label{side-info: match set structure}
Let $\mathcal{I} = \{1,\dots, I\}$ index the matched pairs, then treatment assignment indicators satisfy $Z_{i1} + Z_{i2} = 1,\forall i \in \mathcal{I}$. Write $\mathcal{F}_{\textsf{side},\textsf{match}} = \left\{Z_{i1} + Z_{i2} = 1,\forall i \in \mathcal{I}\right\}$ and $\mathcal{F}_{\textsf{side},\textsf{match}}$ will be referred to as the matched-set-structure side-information.
\end{side_info}

Side-information \ref{side-info: match set structure} imposes the so-called \emph{Cannot-Link} constraints to the conventional $2$-means clustering algorithm when updating the cluster membership at each iteration, and is known as a \emph{constrained 2-means clustering algorithm} in the literature (\citealp{wagstaff2001constrained}). In the Supplementary Material C, we describe in detail how to solve the constrained $2$-means problem, and how to use a machine learning technique called ``metric learning" to update the distance metric at each iteration of the algorithm, by (i) maximizing the distance between dissimilar pairs, that is, units $i1$ and $i2$ within each matched pair $i$, and (ii) enforcing the distance between units in each cluster and the cluster centroid to be small. 

In addition to $K$-means clustering, another popular clustering method in practice is fitting a mixture model. In our application, the observed covariates data naturally arises from a two-component mixture model $\alpha_{\text{treat}}\cdot F_t(\cdot) + (1-\alpha_{\text{treat}})\cdot F_c(\cdot)$, where $F_t(\cdot)$ and $F_c(\cdot)$ represent some parametric family of covariates distribution in the treated and matched control groups, respectively. Moreover, the mixing probability $\alpha_{\text{treat}} = 0.5$ for the matched-pair design. After fitting a mixture model, say using the EM algorithm (\citealp{dempster1977maximum}), one may then compare the relative magnitude of posterior probabilities of belonging to each cluster of two units in each matched pair, and assign cluster membership accordingly.

\section{Quantifying observed covariates' residual confounding in matched samples}
\label{subsec: residual confounding}
\subsection{Introducing residual sensitivity value (RSV)}
The bounding $p$-values $\{\overline{p}_{\Gamma,\textsf{exact}},~\Gamma \geq 1\}$ testing the biased RA give rise to the following measure of residual confounding, termed \emph{residual sensitivity value}, in a matched sample.

\begin{definition}[Residual Sensitivity Value $\underline{\widetilde{\Gamma}}$]
\label{def: residual sensitivity value}
Given matched-pair data, an exact test for Assumption \ref{assum: biased randomization inference}, and a significance level $\alpha$, the residual sensitivity value $\underline{\widetilde{\Gamma}}$ is the smallest $\Gamma \geq 1$ such that the bounding $p$-value $\overline{p}_{\Gamma,\textsf{exact}}$ is not significant, i.e.,
\begin{equation}
    \underline{\widetilde{\Gamma}} := \inf \{\Gamma \geq 1 \mid \overline{p}_{\Gamma,\textsf{exact}} \geq \alpha\}.
\end{equation}
\end{definition}

The residual sensitivity value $\underline{\widetilde{\Gamma}}$ is a measure of residual confounding from observed covariates in the matched-pair data. Any valid test of Assumption \ref{assum: biased randomization inference} can in principle be inverted sequentially to define the RSV, though we focus on the RSVs derived from SS-CPT and CBT in this article. It is an interesting research topic to explore other powerful statistical tests for Assumption \ref{assum: biased randomization inference}.

By definition, if the RA cannot be rejected at the level $\alpha$, then $\overline{p}_{\Gamma = 1,\textsf{exact}} \geq \alpha$ and hence $\underline{\widetilde{\Gamma}} = 1$. The RSV $\underline{\widetilde{\Gamma}}$ serves at least two practical purposes. First, a large $\underline{\widetilde{\Gamma}}$ signals poor balance after statistical matching and empirical researchers may consider improving the current matched comparison; see Section \ref{subsec: improve balance} below for some useful strategies and methods for this purpose. Second, residual confounding from observed covariates needs to be taken into account in the downstream outcome analysis; the randomization-based outcome analysis should relax the RA and report a bounding $p$-value corresponding to taking $\Gamma = \underline{\widetilde{\Gamma}}$ in a biased randomization scheme. We do need to point out again that although $\underline{\widetilde{\Gamma}}$ provides a measure of residual confounding due to imperfect matching on observed covariates, which represents the smallest degree of deviation from the RA, a sensitivity analysis is still much needed to examine the possibility of unmeasured confounding. Researchers should not confuse the proposed RSV with \citet{zhao2018sensitivity}'s sensitivity value, which measures the minimum strength of unmeasured confounding needed to qualitatively alter the outcome analysis.

\subsection{Practical suggestions: What next after diagnostics}
\label{subsec: improve balance}
If covariate imbalance persists in the matched samples, i.e., the residual sensitivity value $\underline{\widetilde{\Gamma}} > 1$, a practical question emerges as how to address the residual confounding. One strategy is to incorporate the residual sensitivity value $\underline{\widetilde{\Gamma}}$ into the primary outcome analysis as discussed in the last section. Alternatively, researchers could employ some useful study design strategies to improve their matched samples. We briefly discuss a few strategies here.

When the control-to-treated ratio is large and only a few categorical covariates are out of balance, one may add an additional layer in the network-flow-based statistical matching algorithm that directly balances the marginal distribution of these ``recalcitrant" covariates. Such a network architecture is known as ``fine-balance" (\citealp{rosenbaum2007minimum}) and resembles similar counterbalancing strategies like the Latin square designs or crossover designs in the experimental design literature. Many variants of this strategy have been developed (\citealp{yang2012optimal, pimentel2015large}) and this strategy has proven successful in many empirical comparative effectiveness studies (see, e.g., \citealp{silber2007does, silber2013characteristics}). If the recalcitrant variable is continuous, one strategy is to fine-balance its quantiles; alternatively, one may minimize the earth mover's distance (EMD) between the marginal distributions of the recalcitrant continuous variables in the treated and matched control groups (\citealp{zhang2021matching}). These design strategies are implemented in the \textsf{R} packages \textsf{bigmatch} (\citealp{yu2020matching}) and \textsf{match2C} (\citealp{zhang2021matching}).

In many circumstances, the covariate overlap between the treated and control groups before matching may be poor and it is not practical to design a matched comparison that utilizes all treated subjects. For instance, this can happen when patients with certain characteristics almost always receive one treatment or the other, thus violating the positivity assumption (\citealp{Westreich2010positivity}). In such an eventuality, one reasonable strategy is to focus on ``a marginal group of [subjects] who may or may not receive the
treatment depending upon circumstances such as availability, preference, or heterogeneous
opinion" (\citealp{rosenbaum2012optimal}). A statistical matching algorithm that achieves this goal is typically referred to as ``optimal subset matching" and an implementation can be found in \citet{rosenbaum2012optimal} and \citet{Rpkg_rcbsubset}. We will illustrate this strategy when examining a real data example in Section \ref{sec: application RHC}.

Sometimes, researchers have access to one or more instrumental variables (IVs) for the treatment under consideration. For instance, in a renowned study of the effect of community college versus four-year college on educational attainment, \citet{rouse1995democratization} used excess distance to the nearest four-year versus community college and excess tuition of the four-year versus community college as instrumental variables for high school students' enrollment into a community college. If there are data on instrumental variables, then researchers may consider an IV-based matched comparison. For instance, in the above example, one may pair a high school graduate who lived nearer to a four-year college compared to a community college to a student who lived farther from a four-year college compared to a community college. Statistical matching in the IV-based analysis ensures that the IV is valid after controlling for observed covariates. There is often superior overlap between the IV-defined treated and control groups compared to the non-IV study design because the IV is often closer to being randomly assigned. Of course, in an IV-based matched analysis, one needs to carefully define the treatment effects of interest; we refer interested readers to \citet{baiocchi2010building,baiocchi2012near}, \citet{heng2019instrumental} and \citet{zhang2020bridging} for related methodological and theoretical development and \citet{lorch2012differential} and \citet{mackay2021association} for empirical studies.

These strategies are not mutually exclusive: one may need to employ one or more at the same time to construct good matched samples and conduct sound statistical inference.

%\begin{figure}[h]%
%  \centering
%  \subfloat[]{%%
%   \includegraphics[width = 0.47\textwidth]{}}
%   \quad
%  \subfloat[]{%%
%    \includegraphics[width = 0.47\textwidth]{}}
%    \caption{\small Left panel: overlap of the IV-defined propensity score of the IV-defined treated and IV-defined control groups. The binary IV is $1$ (or $0$) if the subject had both excess travel time and excess four-year college tuition above (or below) the median and hence encouraged (or discouraged) to attend a community college. Right panel: overlap of the propensity score of the treated (attending a community college) and control groups (attending a four-year college) of the same study population.}
%  \label{fig: two pscores}
%\end{figure}

\section{Simulation studies}
\label{sec: simulation}
\subsection{Goal and structure}
\label{subsec: goal and structure}
We have four goals in the simulation section. First, we compared the power of two implementations of SS-CPT and two implementations of CBT. Second, we compared several statistical matching algorithms and investigated which algorithm produced matched samples with minimal residual imbalance as quantified by the RSV. Third, we examined the relationship between RSV and informal balance diagnostics like the SMDs. Lastly, we investigated the relationship between the performance of downstream randomization inference and both formal and informal diagnostics. 

Our simulation can be summarized as a $2 \times 3 \times 3 \times 4$ factorial design with the following factors:
\begin{description}
\item \textbf{Factor 1:} sample size before matching, $n$: $3000$ and $5000$.
\item \textbf{Factor 2:} observed covariates distribution and overlap: 
$\mathbf X \sim \text{Multivariate Normal}\left(\boldsymbol\mu, \boldsymbol \Sigma\right)$, with $\boldsymbol \mu_{10\times 1} = (cZ, 0, \cdots, 0)^{\text{T}}$ and $\boldsymbol\Sigma = \boldsymbol I_{10\times 10}$ with $c = 0.3$, $0.5$, and $0.7$. 
\item \textbf{Factor 3:} statistical matching algorithm, $\mathcal{M}$: 
\begin{enumerate}
    \item $\mathcal{M}_{\textsf{maha}}$: metric-based matching based on the Mahalanobis distance;
    \item $\mathcal{M}_{\textsf{pscore}}$: propensity score matching with no caliper;
    \item $\mathcal{M}_{\textsf{opt}}$: optimal matching within a $0.2$ SD propensity score caliper.
\end{enumerate}
\item \textbf{Factor 4:} testing procedure: (i). $\textsf{CBT}_{\textsf{K-means}}$: CBT based on a constrained 2-means algorithm; (ii). $\textsf{CBT}_{\textsf{GMM}}$: CBT based on a two-component Gaussian mixture model; (iii). $\textsf{SS-CPT}_{\textsf{accuracy}}$: SS-CPT using $g_{ij}(\hat{\mathbf{f}}_{1\rightarrow2}) = h_{ij}(\hat{\mathbf{f}}_{2\rightarrow1}) = \mathbf{1}\{\hat{f}(\mathbf{x}_{ij})>\hat{f}(\mathbf{x}_{ij^{\prime}})\}$ as scores; (iv). $\textsf{SS-CPT}_{\textsf{pscore}}$: SS-CPT using $g_{ij}(\hat{\mathbf{f}}_{1\rightarrow2}) = h_{ij}(\hat{\mathbf{f}}_{2\rightarrow1}) = \hat{f}(\mathbf{x}_{ij})$ as scores.
\end{description}
Factor $1$ through $3$ define the data-generating processes under consideration; see \citet{rubin1979using} and \citet{zhang2021matching} for some motivation for this simulation setup. In particular, Factor $2$ specifies the observed covariates distribution in the treated and control groups. Parameter $c$ controls the amount of overlap in the treated and control groups before matching: $c = 0.5$ is often considered a moderately large bias (\citealp[expression 3.3]{rubin1979using}). We generated $Z \sim \text{Binomial}(1/3)$ so that the treated-to-control ratio in the simulated datasets was approximately $1$ to $2$. 
Factor $3$ defines three statistical matching algorithms under consideration. All matching algorithms were implemented using the \textsf{R} package \textsf{match2C} (\citealp{zhang2021matching, R_pkg_match2C}). A tutorial of the package can be found via \url{https://cran.r-project.org/web/packages/match2C/vignettes/tutorial.html}. Factor $4$ represents four testing procedures to be studied. There are many other possible implementations of SS-CPT and CBT. Combining SS-CPT and CBT with more powerful machine learning methods may further improve their power. We considered these specific implementations because they are familiar to empirical researchers and easy-to-implement.

Lastly, we generated the potential outcomes $R_T$ and $R_C$ for each unit as follows:
\[
R_T = R_C = 0.2 \times \textsf{sign}\{X_1\} \cdot |X_1|^{0.2} + 0.5\sqrt{|X_2|} - X_3 + \epsilon,\qquad \epsilon \sim \text{Normal}(0, 1).
\]
The observed outcome is $R = R_T \cdot Z + R_C \cdot (1 - Z)$. We focused on this outcome-generating process in this section, and considered additional ones in the Supplementary Material D.

\subsection{Measures of success}
\label{subsec: measure success}
For each data-generating process of $(\mathbf{X}, Z)$ and statistical matching algorithm defined by Factor $1$ through $3$, we repeated the simulation $500$ times and computed the proportion of times the RA was rejected by each of the four algorithms described in Factor $4$. These rejection proportions are denoted by $\textsf{Power}_{\Gamma = 1}$. We then calculated the residual sensitivity values $\underline{\widetilde{\Gamma}}$ determined by each of the four testing procedures for each matched sample. The RSVs complement the rejection proportions at $\Gamma = 1$ by quantifying the extent of deviation from the RA and reflecting the power of each procedure at $\Gamma > 1$. The average RSV is denoted as $\textsf{Mean}~\underline{\widetilde{\Gamma}}$.

To facilitate our understanding of the relationship between informal balance diagnostics and formal statistical tests like those developed in this paper, we also recorded and reported the average standardized mean difference (defined as the difference in means of a covariate in the treated and matched control group, divided by the pooled standard deviation before matching) of the first covariate $X_1$, denoted as $\textsf{SMD}_{X_1}$, and the average median SMD, denoted as $\textsf{SMD}_{0.50}$, across all $10$ covariates. According to our data-generating process, the propensity score distribution in the population is a function of $X_1$ and hence the SMD of $X_1$ after matching essentially captures the SMD of the propensity score after matching.

Finally, we linked the balance diagnostics, both formal and informal, to the performance of randomization-based outcome analysis in two ways. First, we constructed a Hodges-Lehmann point estimate using the \textsf{senWilcox} function in the \textsf{R} package \textsf{DOS} under the RA (assuming $\pi_{i1} = \pi_{i2} = 1/2,~\forall i$ as in \cite{rosenbaum2002observational}) and reported its averaged value across $500$ simulations. This average Hodges-Lehmann estimate is denoted as $\textsf{H-L est}$. Second, we reported its root mean squared error, denoted as $\textsf{RMSE}$ across $500$ simulations, as a measurement of the performance of the outcome analysis. In the Supplementary Material D, we further investigated the performance of statistical inference after incorporating the RSVs.

\subsection{Simulation results}
\begin{table}[ht]
\centering
\caption{Simulation results for various sample sizes $n$, overlap parameter $c$, and statistical matching algorithms $\mathcal{M}$. Roman numerals I to IV are shorthands for $\textsf{CBT}_{\textsf{K-means}}$ (\textsf{I}), $\textsf{CBT}_{\textsf{GMM}}$ (\textsf{II}), $\textsf{SS-CPT}_{\textsf{accuracy}}$ (\textsf{III}), and $\textsf{SS-CPT}_{\textsf{pscore}}$ (\textsf{IV}). The highest power at $\Gamma = 1$ and largest mean RSV for each data-generating process and matching algorithm are highlighted.}
\label{tbl: simulation results main article}
\resizebox{0.9\textwidth}{!}{
\begin{tabular}{cccccccccccccc}
  \hline\\[-0.8em]
  & & & &\multicolumn{4}{c}{$\textsf{Power}_{\Gamma = 1}$} &\multicolumn{4}{c}{$\textsf{Mean}~\underline{\widetilde{\Gamma}}$}&\\
 $n$ & $\mathcal{M}$ & $\textsf{SMD}_{X_1}$ & $\textsf{SMD}_{0.50}$ & $\textsf{I}$ & $\textsf{II}$ &$\textsf{III}$ &$\textsf{IV}$& $\textsf{I}$ & $\textsf{II}$ &$\textsf{III}$ &$\textsf{IV}$ & \textsf{H-L est} & \textsf{RMSE} \\ 
  \hline \\ [-0.8em]
  \multicolumn{14}{c}{c = 0.3}\\
  \multirow{3}{*}{3000} 
& $\mathcal{M}_{\textsf{opt}}$ & 0.01 & 0.01 & 0.04 & \textbf{0.14} & 0.01 & 0.00 & 1.00 & \textbf{1.01} & 1.00 & 1.00 & \textbf{0.02} & \textbf{0.05} \\
& $\mathcal{M}_{\textsf{maha}}$& 0.12 & 0.02 & 0.12 & 0.31 & 0.93 & \textbf{0.97} & 1.01 & 1.02 & 1.16 & \textbf{1.31} & 0.03 & 0.06 \\  
& $\mathcal{M}_{\textsf{pscore}}$ & 0.30 & 0.04 & 0.36 & 0.40 & 0.94 & \textbf{0.98} & 1.03 & 1.03 & 1.17 & \textbf{1.31} & 0.05 & 0.08 \\\\ [-0.8em]

  \multirow{3}{*}{5000} 
  & $\mathcal{M}_{\textsf{opt}}$ & 0.01 & 0.01 & 0.05 & \textbf{0.24} & 0.01 & 0.01 & 1.00 & \textbf{1.01} & 1.00 & 1.00 & \textbf{0.02} & \textbf{0.04} \\ 
  &$\mathcal{M}_{\textsf{maha}}$ & 0.11 & 0.01 & 0.16 & 0.45 & 0.99 & \textbf{1.00} & 1.01 & 1.03 & 1.20 & \textbf{1.38} & 0.03 & 0.05 \\ 
  & $\mathcal{M}_{\textsf{pscore}}$ & 0.30 & 0.03 & 0.56 & 0.53 & 1.00 & \textbf{1.00} & 1.05 & 1.05 & 1.23 & \textbf{1.41} & 0.05 & 0.07 \\\\ [-0.8em]

 \multicolumn{14}{c}{c = 0.5}\\
  \multirow{3}{*}{3000} 
  & $\mathcal{M}_{\textsf{opt}}$ & 0.05 & 0.02 & 0.06 & \textbf{0.23} & 0.07 & 0.16 & 1.00 & 1.01 & 1.00 & \textbf{1.02} & \textbf{0.02} & \textbf{0.05} \\
  & $\mathcal{M}_{\textsf{maha}}$ & 0.21 & 0.02 & 0.46 & 0.47 & 1.00 & \textbf{1.00} & 1.03 & 1.06 & 1.48 & \textbf{1.93} & 0.05 & 0.07 \\  
  & $\mathcal{M}_{\textsf{pscore}}$ & 0.50 & 0.04 & 0.82 & 0.70 & 1.00 & \textbf{1.00} & 1.18 & 1.14 & 1.50 & \textbf{1.92} & 0.08 & 0.10 \\\\ [-0.8em]
  
  \multirow{3}{*}{5000} 
  & $\mathcal{M}_{\textsf{opt}}$ & 0.05 & 0.01 & 0.05 & 0.27 & 0.16 & 0.31 & \textbf{1.00} & 1.01 & 1.01 & \textbf{1.03} & \textbf{0.02} & \textbf{0.04} \\  
  &$\mathcal{M}_{\textsf{maha}}$ & 0.19 & 0.01 & 0.53 & 0.56 & 1.00 & \textbf{1.00} & 1.04 & 1.06 & 1.53 & \textbf{2.02} & 0.04 & 0.06 \\ 
  & $\mathcal{M}_{\textsf{pscore}}$  & 0.50 & 0.03 & 0.92 & 0.83 & 1.00 & \textbf{1.00} & 1.24 & 1.20 & 1.56 & \textbf{2.03} & 0.08 & 0.10 \\
    \\ [-0.8em]

  \multicolumn{14}{c}{c = 0.7}\\
  \multirow{3}{*}{3000} 
  & $\mathcal{M}_{\textsf{opt}}$ & 0.14 & 0.02 & 0.06 & 0.27 & 0.73 & \textbf{0.94} & 1.00 & 1.02 & 1.10 & \textbf{1.54} & \textbf{0.02} & \textbf{0.06} \\ 
  & $\mathcal{M}_{\textsf{maha}}$& 0.31 & 0.02 & 0.90 & 0.64 & 1.00 & \textbf{1.00} & 1.14 & 1.12 & 1.93 & \textbf{2.90} & 0.06 & 0.08 \\ 
  & $\mathcal{M}_{\textsf{pscore}}$ & 0.70 & 0.03 & 0.99 & 0.89 & 1.00 & \textbf{1.00} & 1.49 & 1.42 & 1.88 & \textbf{2.69} & 0.11 & 0.13 \\
    \\ [-0.8em]
  
  \multirow{3}{*}{5000} 
  & $\mathcal{M}_{\textsf{opt}}$ & 0.13 & 0.01 & 0.04 & 0.29 & 0.94 & \textbf{1.00} & 1.00 & 1.02 & 1.20 & \textbf{1.88} & \textbf{0.02} & \textbf{0.04} \\ 
  &$\mathcal{M}_{\textsf{maha}}$ & 0.29 & 0.01 & 0.95 & 0.66 & 1.00 & \textbf{1.00} & 1.16 & 1.11 & 1.99 & \textbf{3.04} & 0.05 & 0.06 \\ 
   & $\mathcal{M}_{\textsf{pscore}}$ & 0.70 & 0.03 & 0.99 & 0.95 & 1.00 & \textbf{1.00} & 1.56 & 1.60 & 1.96 & \textbf{2.86} & 0.11 & 0.12 \\
   \\ [-0.8em]
   \hline
  
\end{tabular}}
\end{table}

Table \ref{tbl: simulation results main article} summarizes the simulation results for different choices of sample size $n$, overlap parameter $c$, statistical matching algorithm $\mathcal{M}$, and testing procedure. We identified several consistent trends from simulation results. First, each of the four testing procedures had improved power when testing the RA ($\textsf{Power}_{\Gamma = 1}$) and identified a larger RSV ($\textsf{Mean}~\underline{\widetilde{\Gamma}}$) for larger sample size $n$ and worse matching quality as captured by $\textsf{SMD}_{X_1}$. The testing procedure $\textsf{SS-CPT}_{\textsf{pscore}}$ (Method \textsf{IV} in Table \ref{tbl: simulation results main article}) seemed to have the largest power when testing the RA and identify the largest RSV in $15/18$ simulation settings. Interestingly, $\textsf{SS-CPT}_{\textsf{pscore}}$ appeared to be superior to $\textsf{SS-CPT}_{\textsf{accuracy}}$ in testing the RA and identifying the RSV in almost all simulation settings. The testing procedure $\textsf{CBT}_{\textsf{GMM}}$ appeared to be slightly more favorable in the other $3$ simulation settings where the sample was well matched and $\textsf{SMD}_{X_1}$ was small. Based on the simulation results, we would recommend using $\textsf{CBT}_{\textsf{GMM}}$ when the largest SMD is less than $0.05$, or one-twentieth of one pooled standard deviation, and $\textsf{SS-CPT}_{\textsf{pscore}}$ otherwise. 

Second, upon examining Table \ref{tbl: simulation results main article}, we found that optimal matching $\mathcal{M}_{\textsf{opt}}$ delivered the best matched samples in the following senses: (i) the average SMD of $X_1$ which served as an informal measure of overall balance was much smaller for $\mathcal{M}_{\textsf{opt}}$ compared to $\mathcal{M}_{\textsf{pscore}}$ or $\mathcal{M}_{\textsf{maha}}$; (ii) with even a moderately large bias before matching ($c = 0.5$) and a large sample size ($n = 5000$), testing procedures could barely reject the RA on an optimally matched sample (highest $\textsf{Power}_{\Gamma = 1} = 0.31$ for $\textsf{SS-CPT}_{\textsf{pscore}}$), and identify little deviation (largest mean RSV = $1.03$ for $\textsf{SS-CPT}_{\textsf{pscore}}$).

Third, we concluded that RSVs supplemented informal diagnostics like SMDs. In empirical studies, an informal rule of thumb says that SMDs of all observed covariates should be less than $0.1$ (\citealp{silber2001multivariate,austin2015moving}). In our simulation studies, many matched samples satisfied this rule of thumb, e.g., optimal matching when $c = 0.3$ or $0.5$ and Mahalanobis-distance-based matching when $c = 0.3$; however, even in these circumstances, the RA was frequently rejected. For instance, when $c = 0.3$ and $\mathcal{M} = \mathcal{M}_{\textsf{maha}}$, $\textsf{SS-CPT}_{\textsf{pscore}}$ rejected $485/500$ matched datasets and identified an RSV as large as $1.31$ on average. The bottom line is that the RSVs provide a formal and quantitative assessment of the quality of matched sample.

Lastly, what are the implications of statistical matching quality on the randomization-based outcome analysis? Table \ref{tbl: simulation results main article} suggests that the root mean squared error (RMSE) of the randomization-based outcome analysis of optimally-matched samples ($\mathcal{M} = \mathcal{M}_{\textsf{opt}}$) seemed to be the best among three matching methods across \emph{all} data-generating processes. Figure \ref{fig: boxplot plots of HL} further presents the boxplots of the Hodges-Lehmann point estimate in matched samples when the sample size $n = 3000$ and the RSV as determined by $\textsf{SS-CPT}_{\textsf{pscore}}$ equal to $1$ (i.e., the RA is not rejected) versus those with RSV larger than $1$ (i.e., the RA is rejected). We found that the estimate was less biased when the RA was not rejected by $\textsf{SS-CPT}_{\textsf{pscore}}$. We also found that the confidence intervals corresponding to the partial identification intervals under a biased RA scheme with $\Gamma = \underline{\widetilde{\Gamma}}$ almost always obtained nominal level, although very conservative, due to the conservative nature of partial identification and Rosenbaum bounds; see Supplementary Material D for details.

We repeated a subset of simulation studies with multiple random initializations when running CBT and multiple random sample-splitting schemes when running SS-CPT, and found that $\textsf{Power}_{\Gamma = 1}$ and $\textsf{Mean}~\underline{\widetilde{\Gamma}}$ remained stable.

\begin{figure}
    \centering
    \includegraphics[width=\textwidth]{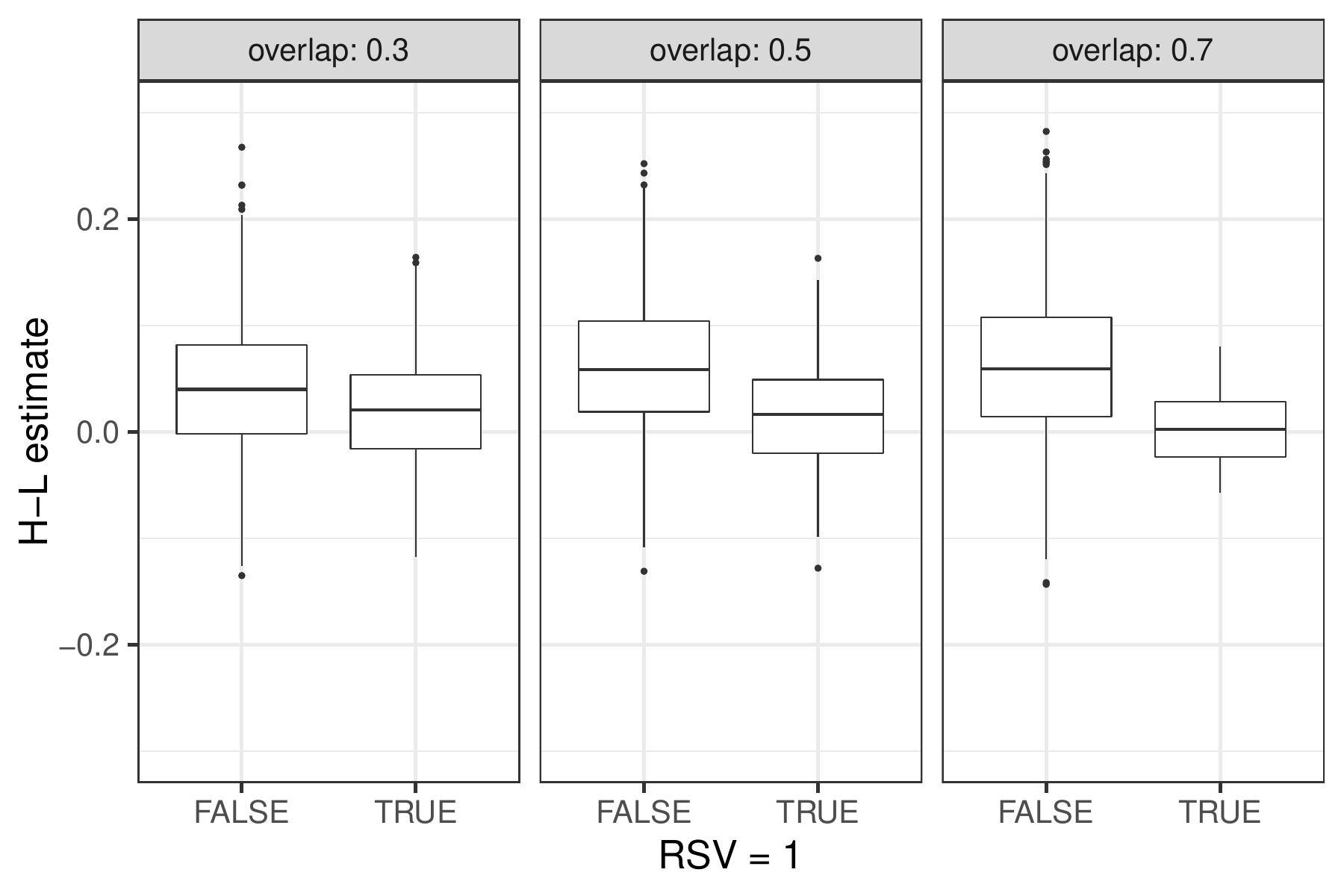}
    \caption{Boxplots of Hodges-Lehmann point estimate in the matched samples with $n = 3000$ and RSV determined by $\textsf{SS-CPT}_{\textsf{pscore}}$ equal to $1$ versus those with larger than $1$.}
    \label{fig: boxplot plots of HL}
\end{figure}
\section{Case study: Effectiveness of right heart catheterization (RHC) in the initial care of critically ill patients}
\label{sec: application RHC}
\subsection{Background, all samples and matched samples}
We illustrate various aspects of our proposed framework by applying it to an observational study investigating the effectiveness of right heart catheterization (RHC) in the initial care of critically ill patients. Since its introduction into the intensive care units (ICUs) almost $50$ years ago, RHC was perceived by clinicians as largely beneficial because direct measurements of heart function by RHC help guide therapies, which was believed to lead eventually to more favorable patient outcomes (\citealp{connors1983evaluation,connors1996effectiveness}). In fact, the belief in RHC's effectiveness was so strong that many physicians refused to enroll their patients into randomized controlled trials for ethical considerations (\citealp{guyatt1991randomized}). In the absence of RCTs, \citet{connors1996effectiveness} examined the effectiveness of RHC in a matching-based prospective cohort study using observational data. We followed \citet{rosenbaum2012optimal} and considered patients under the age of $65$ from the work of \citet{connors1996effectiveness}; among these patients there are $1194$ patients who received RHC and $1804$ who did not. We followed \citet{connors1996effectiveness} and considered observed covariates related to patients demographics, laboratory measurements and vital signs because these factors clearly affect both physicians' decision to use or withhold RHC and patients' outcomes. The outcome of interest is patients' 30-day mortality.

\begin{table}[ht]
\centering
\caption{\small Balance table of all samples and the matched samples. There are a total of $1194$ patients under $65$ undergoing RHC and $1804$ not undergoing RHC. A total of $1194$ matched pairs were formed in the optimal match \textsf{M1} and $1061$ pairs were formed in the optimal subset match \textsf{M2}. SMD is the abbreviation of the standardized mean difference which is equal to the difference in means of covariates in the treated and matched control groups divided by the pooled standard deviation in the treated and control groups before matching.}
\label{tbl: balance table}
\resizebox{\textwidth}{!}{
\begin{tabular}{lcccccccc}
  \hline
 & \multirow{3}{*}{\begin{tabular}{c}All \\RHC \\ ($n = 1194$)\end{tabular}}  & \multirow{3}{*}{\begin{tabular}{c}All\\ No RHC\\ ($n = 1804$)\end{tabular}} & \multirow{3}{*}{\begin{tabular}{c}Abs. \\ SMD\end{tabular}} & \multirow{3}{*}{\begin{tabular}{c}Optimal\\ No RHC \\ ($n = 1194$)\end{tabular}} & \multirow{3}{*}{\begin{tabular}{c}Abs. \\ SMD\end{tabular}} &
 \multirow{3}{*}{\begin{tabular}{c}Subset\\ RHC \\ ($n = 1061$)\end{tabular}} &
 \multirow{3}{*}{\begin{tabular}{c}Subset\\ No RHC \\ ($n = 1061$)\end{tabular}} &
 \multirow{3}{*}{\begin{tabular}{c}Abs. \\ SMD\end{tabular}}\\ \\ \\
  \hline
  \textbf{Demographics} \\
  \hspace{0.3cm} Age, yr & 49.56 & 48.01 & 0.09 & 49.26 & 0.02 & 48.65 & 48.07 & 0.03\\ 
  \hspace{0.3cm} Male, \% & 0.58 & 0.57 & 0.02 & 0.59 & 0.01 & 0.57 & 0.58 & 0.02\\ 
  \hspace{0.3cm} White, \% & 0.72 & 0.72 & 0.01 & 0.73 & 0.02 & 0.73 & 0.72 & 0.01\\ 
  \textbf{Laboratory Measurements} \\
  \hspace{0.3cm} PaO$_2$/FiO$_2$, mm Hg & 196.13 & 243.68 & 0.29 & 214.55 & 0.11 &225.81 & 226.36 & 0.00\\ 
  \hspace{0.3cm} PaO$_2$, mm Hg & 36.74 & 38.67 & 0.11 & 37.62 & 0.05 &37.55 & 37.38 & 0.01\\ 
  \hspace{0.3cm} WBC count, $\times 10^9$L & 15.77 & 14.98 & 0.05 & 15.29 & 0.03 & 15.84 & 15.30 & 0.03\\ 
  \hspace{0.3cm} Creatinine, $\mu$mol/L & 2.46 & 1.94 & 0.17 & 2.15 & 0.10 & 2.07 & 2.16 & 0.03\\ 
  \textbf{Vital Signs} \\
  \hspace{0.3cm} Blood pressure, mm Hg & 69.44 & 87.47 & 0.35 & 75.95 & 0.13 & 80.07 & 79.86 & 0.00\\ 
  \hspace{0.3cm} APACHE III score & 61.07 & 50.51 & 0.36 & 55.63 & 0.19 & 53.34 & 54.56 & 0.04\\ 
  \hspace{0.3cm} Coma score & 17.90 & 21.66 & 0.09 & 16.49 & 0.03 & 20.40 & 20.03 & 0.01\\ 
  \textbf{Propensity Score} & 0.48 & 0.35 & 0.55 & 0.42 & 0.24 & 0.40 & 0.40 & 0.00 \\ 
   \hline
\end{tabular}}
\end{table}

The first three columns of Table \ref{tbl: balance table} summarize patients' characteristics in the full sample. Not surprisingly, a number of important variables are vastly different between the two groups; for instance, the mean blood pressure is only $69.44$ mm Hg in the RHC group compared to $87.47$ mm Hg in the no RHC group. Overall, the two groups before matching have similar demographics but the RHC group was sicker. We considered two matched samples, the first being an optimal pair match that utilized all $n = 1194$ RHC patients and formed $1194$ matched pairs of two patients (one receiving RHC and the other not), and the second being an optimal subset match that formed $1061$ matched pairs (\citealp{rosenbaum2012optimal}). The optimal subset match was performed using the \textsf{R} package \textsf{rcbsubset} with default settings (\citealp{Rpkg_rcbsubset}). We will refer to the optimal match design as $\textsf{M1}$ and the optimal subset match as $\textsf{M2}$. It is evident that both matched designs helped remove overt bias in the observed covariates; however, the question remains as to which design, if any of them, could justify a randomization-based analysis of the primary outcome of 30-day mortality status.

\subsection{Quantifying residual confounding and conducting outcome analysis}
Upon applying $\textsf{SS-CPT}_{\textsf{accuracy}}$ and $\textsf{SS-CPT}_{\textsf{pscore}}$ to the design $\textsf{M1}$, we obtained an RSV of $2.52$ and $5.67$, respectively. The $\textsf{CBT}$ based on fitting a two-component Gaussian mixture model yielded an RSV equal to $1.21$. All three tests rejected the RA for $\textsf{M1}$, though two $\textsf{SS-CPT}$ implementations were more powerful, which seemed to agree well with our simulation results in similar settings. Next, we performed a randomization-based outcome analysis using McNemar's test against the alternative hypothesis that RHC has a negative effect on 30-day mortality. Under the randomization assumption, the $p$-value of outcome analysis was $0.022$, which provided some weak evidence against the null hypothesis of no effect. However, this treatment effect immediately disappeared when taking into account the RSVs obtained using either SS-CPT or CBT. In other words, the observed treatment effect under the RA was likely to be an artifact of imperfect matching and residual imbalance in observed covariates, and a researcher who did not fully take this into account could draw a false conclusion. 

On the other hand, none of our proposed tests rejected the RA for the design $\textsf{M2}$. Again, this aligns well with our simulation results as all absolute SMDs are less than $0.05$ in $\textsf{M2}$, and we found in simulation studies that our testing procedures had little power in similar settings. For the matched design $\textsf{M2}$, we conducted outcome analysis under the RA, and obtained a $p$-value equal to $0.79$, suggesting insufficient evidence against the causal null hypothesis for the study units involved in the design \textsf{M2}. All outcome analyses (under $\Gamma = 1$ or $\Gamma = \underline{\widetilde{\Gamma}}$) were performed using \textsf{R} package \textsf{rbounds} (\citealp{R_pkg_rbounds}). 

In this example, there is a tension between the internal validity (i.e., comparability of the RHC and no RHC groups) and external validity (i.e., how results generalize to a target population): the design \textsf{M1} has superior generalizability over \textsf{M2} as \textsf{M1} utilizes the entire treated group, while the design \textsf{M2} has far superior internal validity as it better approximates an ideal RCT; see \citet{Zhang_template_match} for a method that builds a well-matched sample mimicking a target population.

\section{Discussion: Summary, strengths and weaknesses of proposed tests, and extensions}
\label{sec: discussion}
 It is a popular strategy in comparative effectiveness research to embed observational data into a randomized controlled experiment using statistical matching and analyze matched data as if it were a randomized experiment. As Collin Mallows famously pointed out (see, e.g., \citealp{denby2013conversation}), the most robust statistical technique
is to look at the data; a matched observational study is therefore robust in the sense that it forces researchers to examine the covariate balance and overlap, focus on the covariate space that is well-overlapped, and avoid unfounded extrapolation (\citealp{ho2007matching,rubin2007design,stuart2010matching,rosenbaum2002observational,rosenbaum2010design}). Despite a preferred strategy to draw causal conclusion (in our opinion), there is a gap between an approximate experiment (i.e., data after statistical matching) and a genuine experiment, and this gap is often circumvented by making the randomization assumption justified by informal or formal balance diagnostics.

One important limitation of statistical tests developed for the randomization assumption is that these tests cannot quantify the extent to which the randomization assumption is violated due to residual imbalance in observed covariates. Our proposed testing framework is thus advantageous in its ability to quantify the deviation from the randomization assumption using the residual sensitivity value $\underline{\widetilde{\Gamma}}$. Although our primary focus in the article is matched-pair design, the framework and algorithm can be readily extended to matching with multiple controls; see Supplementary Material B for analogous results for matching-with-multiple-controls designs.

Both SS-CPT and CBT can be used in combination with any user-chosen classification or constraint clustering algorithms. Our simulation studies compared only two specific implementations of each method; more powerful classification and constraint clustering methods could potentially deliver more powerful statistical tests. The SS-CPT method with a propensity score defined score function appeared to be the most powerful in most settings, while the CBT method based on Gaussian mixture modeling appeared to be slightly more advantageous in very closely matched sample. Compared to that of the CBT method, calculation of the exact bounding $p$-value of the SS-CPT method is less computationally efficient for a generic score function.

We recommend empirical researchers to examine the covariate balance using both formal and informal diagnostics, and when possible, incorporate the level of residual confounding into their outcome analysis. For instance, one way to do this is to perform a randomization inference under a biased randomization scheme using Rosenbaum bounds (\citealp{rosenbaum2002observational, rosenbaum2010design}) with the parameter $\Gamma$ set to the magnitude of the residual sensitivity value, i.e., $\Gamma = \underline{\widetilde{\Gamma}}$. Other strategies to formally reflect the study design quality in the downstream outcome analysis are worth exploring.

Failure to reject our proposed test, like failure to reject any statistical test, does not translate to a statement about the correctness of the randomization assumption; in fact, statistical matching algorithms are likely to create dependence among matched pairs or sets so that the independence part of the randomization assumption almost surely does not hold. However, through our extensive simulations (see also simulations in \citet{branson2018randomization}), it appears that when the randomization assumption cannot be rejected by our proposed tests, the randomization-based outcome analysis typically has good statistical performance; in other words, the randomization assumption is a good approximation of the complicated reality in these cases.

Lastly, in order to draw high-quality, convincing causal conclusions, one necessarily needs to perform extensive sensitivity analysis that allows for some hypothetical unmeasured confounding. To stress, while our proposed residual sensitivity value takes into account the deviation from the randomization assumption due to residual observed covariates imbalance, it says nothing about unmeasured confounding; in fact, unmeasured confounding can still bias the random assignment in each matched pair to an arbitrary extent even when the residual sensitivity value is $1$. We recommend reporting the outcome analysis with both a residual sensitivity value and \citet{zhao2018sensitivity}'s sensitivity value that examines the maximum extent of deviation from randomization (possibly due to unmeasured confounding) needed to qualitatively alter the causal conclusion are reported.

\section*{Supplementary Materials}
Extension of the proposed methodology to matching-with-multiple-controls, proofs, additional simulation results, and code.

\section*{Acknowledge}
We would like to acknowledge the editor, associate editor, and three anonymous reviewers for their careful reviews and constructive comments which largely improved the article.

\bibliographystyle{apalike}
\bibliography{paper-ref}

\clearpage
\pagenumbering{arabic}
\begin{center}
    \Large Supplementary Materials for ``Testing relaxed randomization assumptions and quantifying imperfect matching and residual confounding in matched observational studies"
\end{center}

\section*{Supplementary Material A: Detailed Literature Review}
Researchers routinely examine the covariate balance based on certain metrics. Perhaps the most widely used metric is standardized mean differences of all observed covariates before and after matching. A rule of thumb widely appreciated and accepted in empirical comparative effectiveness research is that the absolute standardized mean differences (SMDs) of all covariates are less than $0.1$, or one-tenth of one pooled standard deviation (\citealp{silber2001multivariate}, \citealp{austin2015moving}). More recently, \citet{franklin2014metrics} proposed to use c-statistic as an alternative to the SMDs to examine covariate balance.

Some authors have proposed formal statistical procedures that test the equality of multivariate covariate distributions in the treated and matched control groups (see, e.g., \citealp{hansen2008covariate, austin2009balance, heller2010using,chen2016new,yu2020evaluating}, among others). One prominent example is the cross-match test (\citealp{rosenbaum2005exact,heller2010using}). The cross-match test pairs study units using a statistical matching technique called optimal non-bipartite matching (\citealp{lu2011optimal,baiocchi2012near, zhang2022statistical}), takes as the test statistic the number of times a treated unit is paired with a control unit, and rejects the null hypothesis of equal distribution when the statistic is small.

Another closely-related topic, albeit underexplored in the observational studies literature, is hypothesis testing of the underlying law generating data $X$ belonging to a certain family of distributions based on the \emph{group invariance} structure (\citealp{lehmann1949theory,romano1990behavior,kuchibhotla2020exchangeability,dobriban2021consistency}): for the matched pair data, the equality of covariate distributions in the treated and control groups can be recast as the invariance of the probability distribution of the data under a transformation group that flips the treatment status within each matched pair. One can then construct a standard randomization-based, exact test based on the invariance structure. These aforementioned tests target the null hypothesis of the equality of covariate distributions; however, there is a gap between equality of covariate distributions in the treated and matched control groups and the RA. The downstream, Fisherian \emph{finite-sample} randomization inference relies solely on the RA as the treatment assignment is the only source of randomness in the statistical inference and the procedure need not assume sampling covariates or potential outcomes from some superpopulation. In other words, tests that do not target the RA are best understood as an aid to appraise where a matched comparison stands in relation to a recognizable benchmark (\citealp{heller2010using}).

\section*{Supplementary Material B: Proofs and Additional Results for Matching with Multiple Controls}
\subsection*{B.1: Proof of Proposition~\ref{prop: null distribution biased for CPT}}
\begin{proof}[Proof of Proposition~\ref{prop: null distribution biased for CPT}]
Note that for each $i$, under Assumption~\ref{assum: biased randomization inference}, we have $\sum_{j=1}^{2}Z_{ij}q_{ij}$ is stochastically dominated by $\overline{T}_{\Gamma i}$, i.e., for any $t$, we have $P\left(\sum_{j=1}^{2}Z_{ij}q_{ij} \geq t \mid \mathbf{X}, \textsf{M}\right)\leq P\Big(\overline{T}_{\Gamma i} \geq t \Big)$. Then Proposition~\ref{prop: null distribution biased for CPT} follows immediately from the independence of matched pairs $i=1,\dots, I$ and the fact that $T=\sum_{i=1}^{I}\sum_{j=1}^{2}Z_{ij}q_{ij}$.
\end{proof}

\subsection*{B.2: Proof of Proposition~\ref{prop: null distribution biased}}
\begin{lemma}[Bounding one-sided $p$-value]\label{lemma: one sided null distribution biased}
Let $\mathcal{T} = \{ij: Z_{ij}=1, i=1,\dots, I, j=1, 2\}$ be the set of indices corresponding to the treated units in each matched pair. Let $\Pi_1$ and $\Pi_2$ be the output from an appropriate algorithm $\textsf{ALG}$ in Definition \ref{def: appropriate algorithm}. For $i=1,\dots,I$, define $\widetilde{T}_{\Gamma i}$ to be independent random variables taking the value $1$ with probability $\Gamma/(1+\Gamma)$ and the value $0$ with probability $1/(1+\Gamma)$. Under Assumption~\ref{assum: biased randomization inference} with $\Gamma \geq 1$, we have for any $t$,
\begin{align*}
P\left(|\Pi_{1} \cap  \mathcal{T}| \geq t \mid \mathbf{X}, \textsf{M}\right)&\leq P\Big(\sum_{i=1}^{I}\widetilde{T}_{\Gamma i} \geq t \Big)\simeq 1-\Phi\left(\frac{t-\left\{\Gamma/(1 + \Gamma)\right\}\cdot I }{ \left\{\Gamma/(1+\Gamma)^2 \cdot I\right\}^{1/2}}\right), \\
 P\left(|\Pi_{1} \cap  \mathcal{T}| \leq t \mid \mathbf{X}, \textsf{M}\right)&\leq P\Big(\sum_{i=1}^{I}(1-\widetilde{T}_{\Gamma i}) \leq t\Big) \simeq \Phi\left(\frac{t-\left\{1/(1 + \Gamma)\right\}\cdot I }{ \left\{\Gamma/(1+\Gamma)^2 \cdot I\right\}^{1/2}}\right),
\end{align*}
where $\Phi(\cdot )$ is the distribution function of standard normal distribution and ``$\simeq$" denotes that two sequences are asymptotically equal as $I \rightarrow \infty$.
\end{lemma}

\begin{proof}[Proof of Lemma~\ref{lemma: one sided null distribution biased}]
By Definition~\ref{def: appropriate algorithm} and Assumption~\ref{assum: biased randomization inference}, we have $\mathbf{1}\{Z_{ij_{1}}=1\}$, $i=1,\dots,I$ are i.i.d. random variables such that $P(\mathbf{1}\{Z_{ij_{1}}=1\}=1\mid \mathbf{X}, \textsf{M})=P(Z_{ij_{1}}=1\mid \mathbf{X}, \textsf{M})\in [\frac{1}{1+\Gamma}, \frac{\Gamma}{1+\Gamma}]$. Therefore, we have 
\begin{align*}
    P\left(|\Pi_{1} \cap  \mathcal{T}| \geq t \mid \mathbf{X}, \textsf{M}\right)&=P\left(\sum_{i=1}^{I}\mathbf{1}\{Z_{ij_{1}}=1\} \geq t \mid \mathbf{X}, \textsf{M}\right) \\
    &\leq P\Big(\sum_{i=1}^{I}\widetilde{T}_{\Gamma i} \geq t \Big)\simeq 1-\Phi\left(\frac{t-\left\{\Gamma/(1 + \Gamma)\right\}\cdot I }{ \left\{\Gamma/(1+\Gamma)^2 \cdot I\right\}^{1/2}}\right)
\end{align*}
and 
\begin{align*}
    P\left(|\Pi_{1} \cap  \mathcal{T}| \leq t \mid \mathbf{X}, \textsf{M}\right)&=P\left(\sum_{i=1}^{I}\mathbf{1}\{Z_{ij_{1}}=1\} \leq t \mid \mathbf{X}, \textsf{M}\right)\\
    &\leq P\Big(\sum_{i=1}^{I}(1-\widetilde{T}_{\Gamma i}) \leq t\Big)\simeq \Phi\left(\frac{t-\left\{1/(1 + \Gamma)\right\}\cdot I }{ \left\{\Gamma/(1+\Gamma)^2 \cdot I\right\}^{1/2}}\right).
\end{align*}
So the desired conclusion follows.
\end{proof}

\begin{proof}[Proof of Proposition~\ref{prop: null distribution biased}]
We have for any $t\in \mathbb{R}$,
\begin{align*}
 &\quad P\left((|\Pi_{1} \cap  \mathcal{T}|-I/2)^{2} \geq (t-I/2)^{2} \mid \mathbf{X}, \textsf{M}\right)\\
 &=P\left(|\Pi_{1} \cap  \mathcal{T}| \geq |t-I/2|+I/2 \mid \mathbf{X}, \textsf{M}\right)+P\left(|\Pi_{1} \cap  \mathcal{T}| \leq -|t-I/2|+I/2 \mid \mathbf{X}, \textsf{M}\right)\\
 &\leq  P\Big(\sum_{i=1}^{I}\widetilde{T}_{\Gamma i} \geq |t-I/2|+I/2 \Big)+P\Big(\sum_{i=1}^{I}(1-\widetilde{T}_{\Gamma i}) \leq -|t-I/2|+I/2 \Big) \qquad \qquad \text{(By Lemma~\ref{lemma: one sided null distribution biased})} \\
 &\simeq 1-\Phi\left(\frac{|t-I/2|+I/2-\left\{\Gamma/(1 + \Gamma)\right\}\cdot I }{ \left\{\Gamma/(1+\Gamma)^2 \cdot I\right\}^{1/2}}\right) + \Phi\left(\frac{-|t-I/2|+I/2-\left\{1/(1 + \Gamma)\right\}\cdot I }{ \left\{\Gamma/(1+\Gamma)^2 \cdot I\right\}^{1/2}}\right).
\end{align*}
So the desired conclusion follows. 
\end{proof}

\subsection*{B.3: Extension to $1$-to-$k$ match}
\subsubsection*{B.3.1: Setup}
To extend the result to matching with multiple controls, we assume in each matched set, there are $K+1$ units, and $n = I \times (K+1)$. We first state the randomization assumption and the biased randomization assumption when matching with multiple controls. 

\begin{assumption}[Randomization Assumption in Matched Set Studies, Stated Formally]\label{assum: randomization inference multiple}
The treatment assignments across matched sets are assumed to be independent of each other, with
\begin{equation}
\label{eqn: assumption 1 multiple controls}
P(Z_{ij}=1 \mid \mathbf{X}, \sum_{j'=1}^{K+1} Z_{ij'}=1)= \frac{1}{K+1},   
\end{equation}
for $i = 1, 2, \dots, I$ and $j=1,\dots, K+1$.
\end{assumption}

\begin{assumption}[Biased Randomization Assumption in Multiple Controls Case]\label{assum: biased randomization inference: multiple controls}
Suppose that there are $I$ matched sets, and there is one treated and $K$ controls in each matched set. We assume that treatment assignments across matched sets are independent, with
\begin{equation*}
\Gamma^{-1} \leq \frac{ P(Z_{ij}=1\mid \mathbf{X}, \sum_{j=1}^{K+1}Z_{ij}=1)}{P(Z_{ij^{\prime}}=1\mid \mathbf{X}, \sum_{j=1}^{K+1}Z_{ij}=1)} \leq \Gamma
\end{equation*}
for all $i = 1, 2, \dots, I$ and $j, j^{\prime}\in \{1,\dots,K+1\}$, and some $\Gamma \in [1, \infty)$.
\end{assumption}

\subsubsection*{B.3.2: Extending SS-CPT to 1-to-k match}

We show how to extend SS-CPT to the setting in which one treated subject can be matched with k controls ($k \geq 2$), of which the detailed procedure will be summarized in Algorithm~\ref{alg: SS-CPT with 1-to-k}. We first state the following result from \citet{gastwirth2000asymptotic} for calculating the worst-case p-value under Assumption~\ref{assum: biased randomization inference: multiple controls} for a general test statistics $T=\sum_{i=1}^{I}\sum_{j=1}^{k}Z_{ij}q_{ij}$, where each $q_{ij}$ is some fixed score.

\begin{proposition}[Asymptotic separability algorithm in \citet{gastwirth2000asymptotic}]\label{prop: asymptotic separability}

Consider a test statistic $T=\sum_{i=1}^{I}\sum_{j=1}^{k}Z_{ij}q_{ij}$, where each $q_{ij}$ is some fixed score. Let $b\in \{1,\dots,k-1\}=:[k-1]$, and define
\begin{equation*}
	\overline{\overline{\mu}}_{ib}=\frac{\sum_{j=1}^{b} q_{i(j)}+\Gamma \sum_{j=b+1}^{k} q_{i(j)}}{b+\Gamma(k-b)}, \quad i=1,\dots,I, \ b=1, \dots, k-1,
\end{equation*}
\begin{equation*}
	\overline{\overline{\nu}}_{ib}=\frac{\sum_{j=1}^{b} q^{2}_{i(j)}+\Gamma \sum_{j=b+1}^{k} q^{2}_{i(j)}}{b+\Gamma(k-b)}-\overline{\overline{\mu}}_{ib}^{2}, \quad i=1,\dots,I, \ b=1, \dots, k-1,
\end{equation*}
where we rearrange $q_{i1}, \dots, q_{ik}$ as $q_{i(1)} \leq \dots \leq q_{i(k)}$. Let $\overline{\overline{\mu}}_{i}=\max_{b\in [k-1]}\overline{\overline{\mu}}_{ib}$, $B_{i}=\{b: \overline{\overline{\mu}}_{ib}=\overline{\overline{\mu}}_{i}, b\in [k-1] \}$ and $\overline{\overline{\nu}}_{i}=\max_{b \in B_{i}}\overline{\overline{\nu}}_{ib}$. Under mild regularity conditions specified in \citet{gastwirth2000asymptotic}, given any observed value $t$ of the test statistic $T$, the one-sided worst-case p-value under Assumption~\ref{assum: biased randomization inference: multiple controls} can be asymptotically approximated via
\begin{equation*}
\max	P\left(T \geq t \mid \mathbf{X}, \textsf{M}\right) \simeq 1-\Phi \Bigg(\frac{t-\sum_{i=1}^{I} \overline{\overline{\mu}}_{i}}{\sqrt{\sum_{i=1}^{I}\overline{\overline{\nu}}_{i}}} \Bigg) \quad \text{as $I\rightarrow \infty$}.
\end{equation*}
\end{proposition}

Then we are ready to present the following algorithm for testing Assumption~\ref{assum: biased randomization inference: multiple controls} using SS-CPT with 1-to-k match.

\smallskip

\begin{algorithm}[H]

\caption{Algorithm for Testing Assumption~\ref{assum: biased randomization inference: multiple controls} using SS-CPT with 1-to-k match}
\label{alg: SS-CPT with 1-to-k}

\textbf{Input:} $\mathbf{X}=\{\mathbf{x}_{ij}: i=1,\dots, I, j = 1, \dots, k\}$ and $\mathbf{Z}=\{Z_{ij}: i=1,\dots, I, j = 1, \dots, k\}$. 

\textbf{1.} Randomly split the matched sample into two parts: $\mathcal{I}^{(1)}$ and $\mathcal{I}^{(2)}$ such that $\mathcal{I}^{(1)} \cup \mathcal{I}^{(2)}=\{1,\dots, I\}$ and $\mathcal{I}^{(1)} \cap \mathcal{I}^{(2)}=\emptyset$. 

\textsf{2.} Train a classifier $\hat{f}_{1}$ with covariates $\mathbf{X}^{(1)}=\{ X_{ij}, i \in \mathcal{I}^{(1)}, j=1,\dots, k\}$ as predictors and treatment status $\mathbf{Z}^{(1)}=\{ Z_{ij}, i \in \mathcal{I}^{(1)}, j=1,\dots, k\}$ as labels. %Let $\hat{f}_{1}(\mathbf{x})$ denote the predicted score (e.g., the predicted probability of receiving the treatment) given the covariates $\mathbf{x}$ and $\hat{f}_{1}$. 

\textsf{3.} Train a classifier $\hat{f}_{2}$ with covariates $\mathbf{X}^{(2)}=\{ X_{ij}, i \in \mathcal{I}^{(2)}, j=1, \dots, k\}$ as predictors and treatment status $\mathbf{Z}^{(2)}=\{ Z_{ij}, i \in \mathcal{I}^{(2)}, j=1,\dots, k\}$ as labels.

\textsf{4.} Let $\hat{\mathbf{f}}_{1\rightarrow2}=\{\hat{f}_1(\mathbf{x}_{ij}), i\in \mathcal{I}^{(2)}, j=1,\dots, k\}$ denote the predicted scores for $\mathcal{I}^{(2)}$ based on the classifier $\hat{f}_{1}$. Define the test statistic $T_{1\rightarrow 2}=\sum_{i\in \mathcal{I}^{(2)} } \sum_{j=1}^{k}Z_{ij}g_{ij}(\hat{\mathbf{f}}_{1\rightarrow2})$, and calculate the worst-case $p$-value under the biased RA with $\Gamma$ according to Proposition~\ref{prop: asymptotic separability}. Denote this bounding $p$-value by $p_{1\rightarrow 2, \Gamma}$.

\textsf{5.} Let $\hat{\mathbf{f}}_{2\rightarrow1}=\{\hat{f}_2(\mathbf{x}_{ij}), i\in \mathcal{I}^{(1)}, j=1,\dots, k\}$ denote the predicted scores for $\mathcal{I}^{(1)}$ based on the classifier $\hat{f}_{2}$. Define the test statistic $T_{2\rightarrow 1}=\sum_{i\in \mathcal{I}^{(1)} } \sum_{j=1}^{k}Z_{ij}h_{ij}(\hat{\mathbf{f}}_{2\rightarrow 1})$, and calculate the worst-case $p$-value under the biased RA with $\Gamma$ according to Proposition~\ref{prop: asymptotic separability}. Denote this bounding $p$-value by $p_{2\rightarrow 1, \Gamma}$.

\textbf{Output:} Reject Assumption~\ref{assum: biased randomization inference} with the prespecified $\Gamma$ if and only if $\min\{ p_{1\rightarrow 2, \Gamma}, p_{2\rightarrow 1, \Gamma}\}<\alpha/2$.

\end{algorithm}

\subsubsection*{B.3.3: Extending CBT to 1-to-k match}
We cluster $n$ observations into a treated cluster and a control cluster with size $I$ and $I \times K$, respectively, using an appropriate clustering algorithm defined as follows. 

\begin{definition}[Algorithm Appropriateness for Multiple Controls]\label{def: appropriate algorithm multiple contrls}
Let $\mathcal{F} = \{\mathbf{X}, \sum_{j=1}^{K+1} Z_{ij} = 1,\forall i = 1, \cdots, I\}$ be the collection of information. An algorithm $\textsf{ALG}$ is said to be \emph{appropriate} if it takes as input $\mathcal{F}$, and outputs a partition of $I \times K$ matched units into two groups: $\prod_{1}=\{ij_{1}: i=1,\dots,I\}$ and $\prod_{2}=\{ij_{2}: i=1,\dots,I\}$ with $|\prod_1| = I, |\prod_2| = I \times K$, subject to the constraint that there is one and only one treated unit in each matched set $i$.
\end{definition}

The clustering algorithm needs to satisfy the following constraint. 
\begin{constraint}[Multiple-Controls]
In each matched set $i = 1, 2, \cdots, I$, there is one and only one unit from unit $i1,i2,\cdots, i(K+1)$ be clustered to treated cluster, and the other units be clustered to control cluster. 
\label{cons:multi}
\end{constraint}

We now formally state how to test the RA and biased RA using an appropriate algorithm under the CBT framework.

\begin{proposition}[Null distribution with multiple controls]\label{prop: null distribution: multiple controls}
Let $\mathcal{T}=\{ij: Z_{ij}=1, i=1,\dots, I, j=1, \dots, K+1\}$ be the set of indices corresponding to the treated unit in each matched pair. Let $\Pi_1$ be the output from an appropriate algorithm $\textsf{ALG}$ in Definition \ref{def: appropriate algorithm multiple contrls}. Under Assumption~\ref{assum: randomization inference multiple}, we have $|\Pi_{1} \cap  \mathcal{T}|\sim \text{Binomial}\ ( I, 1/(K+1))$.
\end{proposition}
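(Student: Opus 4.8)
The plan is to mirror the proof of Proposition~\ref{prop: null distribution} almost verbatim, replacing the within-pair success probability $1/2$ by the within-set probability $1/(K+1)$. The essential observation is that an appropriate algorithm in the sense of Definition~\ref{def: appropriate algorithm multiple contrls} selects, for each matched set $i$, a single unit $ij_1$ to place in $\Pi_1$ using only the information in $\mathcal{F}=\{\mathbf{X},\sum_{j'=1}^{K+1}Z_{ij'}=1,\forall i\}$. In particular, the chosen index $j_1$ is a function of the covariates and the matched-set constraints (possibly together with external randomization independent of the treatment assignment) and does not use knowledge of which unit is actually treated.

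First I would fix a matched set $i$ and compute the probability that the selected unit is in fact the treated one. Since $j_1$ depends only on $\mathcal{F}$ (and possibly independent external randomization) and is therefore independent of the treatment label given $\mathcal{F}$, Assumption~\ref{assum: randomization inference multiple} gives
\begin{align*}
P\bigl(\mathbf{1}\{Z_{ij_1}=1\}=1 \mid \mathbf{X},\textstyle\sum_{j'=1}^{K+1}Z_{ij'}=1\bigr)
&= P\bigl(Z_{ij_1}=1 \mid \mathbf{X},\textstyle\sum_{j'=1}^{K+1}Z_{ij'}=1\bigr)\\
&= \frac{1}{K+1},
\end{align*}
because each of the $K+1$ positions is equally likely to carry the single treatment within the set.

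Next I would invoke the independence of treatment assignments across matched sets granted by Assumption~\ref{assum: randomization inference multiple}, so that the indicators $\mathbf{1}\{Z_{ij_1}=1\}$, $i=1,\dots,I$, form a collection of i.i.d.\ Bernoulli random variables each with success probability $1/(K+1)$. Writing $|\Pi_1\cap\mathcal{T}|=\sum_{i=1}^{I}\mathbf{1}\{Z_{ij_1}=1\}$ and summing independent Bernoulli variables then yields $|\Pi_1\cap\mathcal{T}|\sim\text{Binomial}(I,1/(K+1))$, as claimed.

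The only point requiring care---and the step I would guard most carefully---is the independence of the index $j_1$ from the treatment label within each set. This is precisely what Definition~\ref{def: appropriate algorithm multiple contrls} secures: appropriateness restricts the algorithm's information to $\mathcal{F}$, which encodes only the covariates and the knowledge that exactly one unit per set is treated. Any internal randomization in the algorithm (for instance, the ``randomly select a unit'' tie-breaking step in Step~2) is independent of $\mathbf{Z}$, so it does not disturb the conditional uniformity supplied by Assumption~\ref{assum: randomization inference multiple}. Everything else is the routine summation of i.i.d.\ Bernoulli trials.
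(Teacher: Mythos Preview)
Your proposal is correct and follows essentially the same argument as the paper: use Definition~\ref{def: appropriate algorithm multiple contrls} and Assumption~\ref{assum: randomization inference multiple} to conclude that the indicators $\mathbf{1}\{Z_{ij_1}=1\}$ are i.i.d.\ Bernoulli$(1/(K+1))$, then write $|\Pi_1\cap\mathcal{T}|=\sum_{i=1}^{I}\mathbf{1}\{Z_{ij_1}=1\}$ to obtain the Binomial$(I,1/(K+1))$ distribution. Your discussion of why $j_1$ is independent of the treatment label is a welcome elaboration of a step the paper leaves implicit, but the route is the same.
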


\begin{proof}[Proof of Proposition~\ref{prop: null distribution: multiple controls}]
By Definition~\ref{def: appropriate algorithm multiple contrls} and Assumption~\ref{assum: randomization inference multiple}, we have $\mathbf{1}\{Z_{ij_{1}}=1\}$, $i=1,\dots,I$ are i.i.d. random variables such that $P(\mathbf{1}\{Z_{ij_{1}}=1\}=1\mid \mathbf{X}, \sum_{j=1}^{K+1}Z_{ij}=1)=P(Z_{ij_{1}}=1\mid \mathbf{X}, \sum_{j=1}^{K+1}Z_{ij}=1)=1/(K+1)$. Therefore, conditional on $\{\mathbf{X}, \sum_{j=1}^{K+1}Z_{ij}=1, i=1,\dots,I\}$, we have $|\Pi_{1} \cap  \mathcal{T}|=\sum_{i=1}^{I}\mathbf{1}\{Z_{ij_{1}}=1\}\sim Binomial(I, 1/(K+1))$.
\end{proof}

\begin{proposition}[Bounding one-sided $p$-value with multiple controls]\label{prop: one sided null distribution biased with multiple controls}
Let $\mathcal{T} = \{ij: Z_{ij}=1, i=1,\dots, I, j=1, \dots, K+1\}$ be the set of indices corresponding to the treated units in each matched pair. Let $\Pi_1$ and $\Pi_2$ be the output from an appropriate algorithm $\textsf{ALG}$ in Definition \ref{def: appropriate algorithm multiple contrls}. For $i=1,\dots,I$, define $\overline{T}_{\Gamma i}$ to be independent random variables taking the value $1$ with probability $\Gamma/(K+\Gamma)$ and the value $0$ with probability $K/(K+\Gamma)$, and define $\overline{\overline{T}}_{\Gamma i}$ to be independent random variables taking the value $1$ with probability $1/(1+K\Gamma)$ and the value $0$ with probability $(K\Gamma)/(1+K\Gamma)$. Under Assumption~\ref{assum: biased randomization inference: multiple controls} with $\Gamma \geq 1$, we have for any $t\in \mathbb{R}$,
\begin{align*}
P\left(|\Pi_{1} \cap  \mathcal{T}| \geq t \mid \mathbf{X}, \sum_{j=1}^{K+1}Z_{ij}=1\right)&\leq P\Big(\sum_{i=1}^{I}\overline{T}_{\Gamma i} \geq t \Big)\simeq 1-\Phi\left(\frac{t-\left\{\Gamma/(K + \Gamma)\right\}\cdot I }{ \left\{(K\Gamma)/(K+\Gamma)^2 \cdot I\right\}^{1/2}}\right), \\
 P\left(|\Pi_{1} \cap  \mathcal{T}| \leq t \mid \mathbf{X}, \sum_{j=1}^{K+1}Z_{ij}=1\right)&\leq P\Big(\sum_{i=1}^{I}\overline{\overline{T}}_{\Gamma i} \leq t\Big) \simeq \Phi\left(\frac{t-\left\{1/(1 + K\Gamma)\right\}\cdot I }{ \left\{(K\Gamma)/(1+K\Gamma)^2 \cdot I\right\}^{1/2}}\right),
\end{align*}
where $\Phi(\cdot )$ is the distribution function of standard normal distribution and ``$\simeq$" denotes that two sequences are asymptotically equal as $I \rightarrow \infty$.
\end{proposition}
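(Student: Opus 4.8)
The plan is to follow the template of the proof of Proposition~\ref{prop: one sided null distribution biased}, replacing the pair-specific constants by their $(K+1)$-unit analogues. First I would record that an appropriate algorithm in the sense of Definition~\ref{def: appropriate algorithm multiple contrls} selects, for each matched set $i$, a single index $j_1 = j_1(i)$ to place in $\Pi_1$, and that this selection is a function of $\mathcal{F} = \{\mathbf{X}, \sum_{j} Z_{ij} = 1\}$ only, so it carries no information about which unit is actually treated. Consequently $|\Pi_1 \cap \mathcal{T}| = \sum_{i=1}^{I} \mathbf{1}\{Z_{ij_1(i)} = 1\}$, and conditional on $\{\mathbf{X}, \sum_{j} Z_{ij} = 1, i = 1,\dots,I\}$ these indicators are independent Bernoulli variables with parameter $p_{ij_1} := P(Z_{ij_1} = 1 \mid \mathbf{X}, \sum_{j'} Z_{ij'} = 1)$.

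The crux is to bound $p_{ij_1}$ under Assumption~\ref{assum: biased randomization inference: multiple controls}. Writing the $K+1$ conditional assignment probabilities within set $i$ as $p_{i1},\dots,p_{i(K+1)}$, they satisfy $\sum_{j=1}^{K+1} p_{ij} = 1$ together with $\Gamma^{-1} \leq p_{ij}/p_{ij'} \leq \Gamma$ for every $j,j'$. The bound then follows by direct summation: since each of the remaining $K$ probabilities is at least $p_{ij_1}/\Gamma$, summing the constraint gives $1 - p_{ij_1} \geq K p_{ij_1}/\Gamma$, i.e.\ $p_{ij_1} \leq \Gamma/(K+\Gamma)$; since each of the remaining $K$ probabilities is at most $\Gamma p_{ij_1}$, summing gives $1 - p_{ij_1} \leq K\Gamma p_{ij_1}$, i.e.\ $p_{ij_1} \geq 1/(1+K\Gamma)$. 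These are exactly the success probabilities defining $\overline{T}_{\Gamma i}$ and $\overline{\overline{T}}_{\Gamma i}$, and they reduce to $\Gamma/(1+\Gamma)$ and $1/(1+\Gamma)$ when $K=1$, recovering Proposition~\ref{prop: one sided null distribution biased}. I expect this elementary two-inequality bound to be the only genuinely new ingredient relative to the pair case; everything else is bookkeeping.

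With these bounds in hand, both inequalities follow from stochastic dominance of sums of independent Bernoulli variables. Because $p_{ij_1} \leq \Gamma/(K+\Gamma)$, the conditional law of $\sum_{i} \mathbf{1}\{Z_{ij_1}=1\}$ is stochastically dominated by that of $\sum_{i} \overline{T}_{\Gamma i}$, giving $P(|\Pi_1 \cap \mathcal{T}| \geq t \mid \mathbf{X}, \sum_{j} Z_{ij}=1) \leq P(\sum_{i} \overline{T}_{\Gamma i} \geq t)$; because $p_{ij_1} \geq 1/(1+K\Gamma)$, the same sum stochastically dominates $\sum_{i} \overline{\overline{T}}_{\Gamma i}$, which upon reversing the inequality direction yields $P(|\Pi_1 \cap \mathcal{T}| \leq t \mid \mathbf{X}, \sum_{j} Z_{ij}=1) \leq P(\sum_{i} \overline{\overline{T}}_{\Gamma i} \leq t)$. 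The ``$\simeq$'' normal approximations then come from the Lindeberg--L\'evy central limit theorem applied to the i.i.d.\ bounding variables: $\overline{T}_{\Gamma i}$ has mean $\Gamma/(K+\Gamma)$ and variance $(K\Gamma)/(K+\Gamma)^2$, while $\overline{\overline{T}}_{\Gamma i}$ has mean $1/(1+K\Gamma)$ and variance $(K\Gamma)/(1+K\Gamma)^2$, which are precisely the centering and scaling inside $\Phi(\cdot)$; standardizing the two sums and letting $I \to \infty$ delivers the stated equivalences.
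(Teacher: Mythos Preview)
Your proof is correct and follows essentially the same approach as the paper: the paper's proof simply asserts that $P(Z_{ij_1}=1\mid \mathbf{X},\sum_j Z_{ij}=1)\in[1/(1+K\Gamma),\,\Gamma/(K+\Gamma)]$ and then writes down the stochastic-dominance inequality and the normal approximation, whereas you supply the elementary derivation of that interval from the ratio constraints and make the stochastic-dominance and CLT steps explicit. Your argument is thus a more fleshed-out version of the paper's.
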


\begin{proof}[Proof of Proposition~\ref{prop: one sided null distribution biased with multiple controls}]
By Definition~\ref{def: appropriate algorithm multiple contrls} and Assumption~\ref{assum: biased randomization inference: multiple controls}, we have $\mathbf{1}\{Z_{ij_{1}}=1\}$, $i=1,\dots,I$ are i.i.d. random variables such that $P(\mathbf{1}\{Z_{ij_{1}}=1\}=1\mid \mathbf{X}, \sum_{j=1}^{K+1}Z_{ij}=1)=P(Z_{ij_{1}}=1\mid \mathbf{X}, \sum_{j=1}^{K+1}Z_{ij}=1)\in [\frac{1}{1+K\Gamma}, \frac{\Gamma}{K+\Gamma}]$. Therefore, we have for any $t\in \mathbb{R}$,
\begin{align*}
    P\left(|\Pi_{1} \cap  \mathcal{T}| \geq t \mid \mathbf{X}, \sum_{j=1}^{K+1}Z_{ij}=1\right)&=P\left(\sum_{i=1}^{I}\mathbf{1}\{Z_{ij_{1}}=1\} \geq t \mid \mathbf{X}, \sum_{j=1}^{K+1}Z_{ij}=1\right) \\
    &\leq P\Big(\sum_{i=1}^{I}\overline{T}_{\Gamma i} \geq t \Big)\simeq 1-\Phi\left(\frac{t-\left\{\Gamma/(K + \Gamma)\right\}\cdot I }{ \left\{(K\Gamma)/(K+\Gamma)^2 \cdot I\right\}^{1/2}}\right)
\end{align*}
and 
\begin{align*}
    P\left(|\Pi_{1} \cap  \mathcal{T}| \leq t \mid \mathbf{X}, \sum_{j=1}^{K+1}Z_{ij}=1\right)&=P\left(\sum_{i=1}^{I}\mathbf{1}\{Z_{ij_{1}}=1\} \leq t \mid \mathbf{X}, \sum_{j=1}^{K+1}Z_{ij}=1\right)\\
    &\leq P\Big(\sum_{i=1}^{I}\overline{\overline{T}}_{\Gamma i} \leq t\Big)\simeq \Phi\left(\frac{t-\left\{1/(1 + K\Gamma)\right\}\cdot I }{ \left\{(K\Gamma)/(1+K\Gamma)^2 \cdot I\right\}^{1/2}}\right).
\end{align*}
So the desired conclusion follows.
\end{proof}

\begin{proposition}[Bounding two-sided $p$-value with multiple controls]\label{prop: null distribution biased with multiple controls}
Let $\mathcal{T} = \{ij: Z_{ij}=1, i=1,\dots, I, j=1, \dots, K+1\}$ be the set of indices corresponding to the treated units in each matched pair. Let $\Pi_1$ and $\Pi_2$ be the output from an appropriate algorithm $\textsf{ALG}$ in Definition \ref{def: appropriate algorithm multiple contrls}. For $i=1,\dots,I$, define $\overline{T}_{\Gamma i}$ to be independent random variables taking the value $1$ with probability $\Gamma/(K+\Gamma)$ and the value $0$ with probability $K/(K+\Gamma)$, and define $\overline{\overline{T}}_{\Gamma i}$ to be independent random variables taking the value $1$ with probability $1/(1+K\Gamma)$ and the value $0$ with probability $(K\Gamma)/(1+K\Gamma)$. Under Assumption~\ref{assum: biased randomization inference: multiple controls} with $\Gamma \geq 1$, we have for any $t\in \mathbb{R}$,
\begin{align*}
 &\quad P\left((|\Pi_{1} \cap  \mathcal{T}|-I/2)^{2} \geq (t-I/2)^{2} \mid \mathbf{X}, \sum_{j=1}^{K+1}Z_{ij}=1\right)\\
 &\leq  P\Big(\sum_{i=1}^{I}\overline{T}_{\Gamma i} \geq |t-I/2|+I/2 \Big)+P\Big(\sum_{i=1}^{I}\overline{\overline{T}}_{\Gamma i} \leq -|t-I/2|+I/2 \Big) \\
 &\simeq 1-\Phi\left(\frac{|t-I/2|+I/2-\left\{\Gamma/(K + \Gamma)\right\}\cdot I }{ \left\{(K\Gamma)/(K+\Gamma)^2 \cdot I\right\}^{1/2}}\right) + \Phi\left(\frac{-|t-I/2|+I/2-\left\{1/(1 + K\Gamma)\right\}\cdot I }{ \left\{(K\Gamma)/(1+K\Gamma)^2 \cdot I\right\}^{1/2}}\right),
\end{align*}
where $\Phi(\cdot )$ is the distribution function of standard normal distribution and ``$\simeq$" denotes that two sequences are asymptotically equal as $I \rightarrow \infty$.
\end{proposition}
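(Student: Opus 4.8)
The plan is to reduce this two-sided bound to the one-sided bound already established in Proposition~\ref{prop: one sided null distribution biased with multiple controls}, in exact parallel with how the two-sided matched-pair bound in Proposition~\ref{prop: null distribution biased} was derived from its one-sided companion. Write $W = |\Pi_1 \cap \mathcal{T}|$ for brevity. The first step is purely algebraic: observe that the quadratic event $\{(W - I/2)^2 \ge (t - I/2)^2\}$ is identical to $\{|W - I/2| \ge |t - I/2|\}$, which splits into the two one-tailed events $\{W \ge |t-I/2| + I/2\}$ and $\{W \le -|t-I/2| + I/2\}$. Because these two events are disjoint, the probability of their union equals the sum of the two tail probabilities, conditional on $\{\mathbf{X}, \sum_{j=1}^{K+1} Z_{ij} = 1\}$.

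The second step applies Proposition~\ref{prop: one sided null distribution biased with multiple controls} separately to each tail. For the upper tail I would invoke its first inequality at threshold $|t-I/2| + I/2$, bounding the conditional probability $P(W \ge |t-I/2|+I/2 \mid \mathbf{X}, \sum_{j=1}^{K+1} Z_{ij}=1)$ by $P(\sum_{i=1}^I \overline{T}_{\Gamma i} \ge |t-I/2| + I/2)$; for the lower tail I would invoke its second inequality at threshold $-|t-I/2| + I/2$, bounding $P(W \le -|t-I/2| + I/2 \mid \mathbf{X}, \sum_{j=1}^{K+1} Z_{ij}=1)$ by $P(\sum_{i=1}^I \overline{\overline{T}}_{\Gamma i} \le -|t-I/2| + I/2)$. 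Adding these gives precisely the finite-sample upper bound in the statement. The key fact carried over from the one-sided result (itself a consequence of Definition~\ref{def: appropriate algorithm multiple contrls} and Assumption~\ref{assum: biased randomization inference: multiple controls}) is that the conditional success probability $P(\mathbf{1}\{Z_{ij_1}=1\}=1 \mid \cdots)$ lies in $[1/(1+K\Gamma), \Gamma/(K+\Gamma)]$, so the upper tail is dominated by the Bernoulli variable with the largest feasible success probability $\Gamma/(K+\Gamma)$, namely $\overline{T}_{\Gamma i}$, and the lower tail by the one with the smallest feasible success probability $1/(1+K\Gamma)$, namely $\overline{\overline{T}}_{\Gamma i}$.

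The final step is the asymptotic normal approximation as $I \to \infty$. Since $\overline{T}_{\Gamma i}$ are i.i.d.\ Bernoulli with mean $\Gamma/(K+\Gamma)$ and variance $K\Gamma/(K+\Gamma)^2$, and $\overline{\overline{T}}_{\Gamma i}$ are i.i.d.\ Bernoulli with mean $1/(1+K\Gamma)$ and variance $K\Gamma/(1+K\Gamma)^2$, the central limit theorem standardizes each partial sum by its mean and standard deviation, producing the two $\Phi$ terms displayed in the statement. I do not expect a deep obstacle, since the result is essentially a corollary of Proposition~\ref{prop: one sided null distribution biased with multiple controls}; the main points requiring care are verifying that the two tail events are genuinely disjoint so that their probabilities add, and correctly pairing each tail with its bounding Bernoulli family---$\overline{T}_{\Gamma i}$ for the upper tail and $\overline{\overline{T}}_{\Gamma i}$ for the lower---since swapping them would yield a plausible-looking but invalid bound.
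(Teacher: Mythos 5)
Your proof is correct and takes essentially the same route as the paper's own: split the quadratic event $\{(|\Pi_1\cap\mathcal{T}|-I/2)^2 \ge (t-I/2)^2\}$ into the two one-tailed events, bound the upper tail by $\sum_{i=1}^I \overline{T}_{\Gamma i}$ and the lower tail by $\sum_{i=1}^I \overline{\overline{T}}_{\Gamma i}$ via Proposition~\ref{prop: one sided null distribution biased with multiple controls}, and finish with the central limit theorem. No gaps to report.
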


\begin{proof}[Proof of Proposition~\ref{prop: null distribution biased with multiple controls}]
We have for any $t\in \mathbb{R}$,
\begin{align*}
 &\quad P\left((|\Pi_{1} \cap  \mathcal{T}|-I/2)^{2} \geq (t-I/2)^{2} \mid \mathbf{X}, \sum_{j=1}^{K+1}Z_{ij}=1\right)\\
 &=P\left(|\Pi_{1} \cap  \mathcal{T}| \geq |t-I/2|+I/2 \mid \mathbf{X}, \sum_{j=1}^{K+1}Z_{ij}=1\right)+P\left(|\Pi_{1} \cap  \mathcal{T}| \leq -|t-I/2|+I/2 \mid \mathbf{X}, \sum_{j=1}^{K+1}Z_{ij}=1\right)\\
 &\leq  P\Big(\sum_{i=1}^{I}\overline{T}_{\Gamma i} \geq |t-I/2|+I/2 \Big)+P\Big(\sum_{i=1}^{I}\overline{\overline{T}}_{\Gamma i} \leq -|t-I/2|+I/2 \Big) \qquad \qquad \text{(By Proposition~\ref{prop: one sided null distribution biased with multiple controls})} \\
 &\simeq 1-\Phi\left(\frac{|t-I/2|+I/2-\left\{\Gamma/(K + \Gamma)\right\}\cdot I }{ \left\{(K\Gamma)/(K+\Gamma)^2 \cdot I\right\}^{1/2}}\right) + \Phi\left(\frac{-|t-I/2|+I/2-\left\{1/(1 + K\Gamma)\right\}\cdot I }{ \left\{(K\Gamma)/(1+K\Gamma)^2 \cdot I\right\}^{1/2}}\right).
\end{align*}
So the desired conclusion follows. 
\end{proof}

\section*{Supplementary Material C: Details on K-means clustering}
One simple strategy to search for a clustering that respects the side-information leverages a \emph{constrained $K$-means clustering algorithm} with \emph{metric learning}. The conventional, unconstrained $K$-means clustering algorithm partitions $n$ observations into $K$ clusters (\citealp{macqueen1967some, forgy1965cluster}), where $K$ is a pre-specified number of clusters. The algorithm begins by choosing $K$ initial cluster centers, which are then refined iteratively as follows:
\begin{itemize}
    \item[1.] Each observation $\boldsymbol{x}_i$, $i = 1, \dots, n$, is assigned to the cluster with the closest mean (cluster centroid or cluster center).
    \item[2.] Cluster centers $\bm c_j$, $j = 1, \dots, K$, are updated based on the current partition.
\end{itemize}
When the assignment no longer changes, the algorithm converges and clusters $n$ observations into $K$ clusters. In our application, the matched-set-structure side information \ref{side-info: match set structure} further imposes the following \emph{Cannot-Link} constraints to the conventional $K$-means clustering:

%Conventionally, $K$-means clustering algorithm can be regarded as an unsupervised learning approach. However, because of $\textsf{ALG}$ and the availability of side-information \ref{side-info: match set structure} in addition to the observations themselves, we may impose so called \emph{Cannot-Link} constraint to improve the performance of the algorithm:

\begin{constraint}[Cannot-Link Constraint] \label{Cannot-link contraint}
Unit $i1$ and $i2$ in each matched pair $i = 1, \dots, I$ cannot end up in the same cluster.
\end{constraint}

The conventional $K$-means clustering algorithm plus the \emph{Cannot-Link} constraint leads to the so-called \emph{constrained $K$-means clustering algorithm} originally proposed by \citet{wagstaff2001constrained}. The \emph{Cannot-Link} constraint defines a transitive binary relation over the observations. Therefore, we take a transitive closure over the \emph{Cannot-Link} constraint when we make use of it. In comparison to the conventional $K$-means clustering algorithm, the major modification is that when updating the  cluster memberships, we make sure that the \emph{Cannot-Link} constraints are not violated. 

%Some follow-up research has been done by \citet{pourrajabi2014model, de2012constrained} based on this idea. 

%We pause our discussion of metric learning for a while by formally stating the constrained $K$-means algorithm for a two-sided falsification test for Assumption \ref{assum: randomization inference}. We first set $K = 2$ since our main goal is to distinguish two groups: treated and control. We also assume pair matching (i.e. each matched pair contains one treated unit and one control unit). Then the algorithm can be described below:

\begin{algorithm}[ht]
\textbf{Input:} $\{\mathbf{x}_{ij}: i=1,\dots, I, j = 1, 2\}$ and $\mathcal{T} = \{ij, Z_{ij} = 1, i = 1, \dots, I, j = 1, 2 \}$ \;
\textbf{1.} Let $\prod_1 = \{ij_1: i = 1, \dots, I\}$ and $\prod_2 = \{ij_2: i = 1, \dots, I\}$ be initialized clusters, and $\bm c_1$ and $\bm c_2$ centers of $\prod_1$ and $\prod_2$, respectively  \;
\textsf{2.} Assign $ij$ to $\prod_1$ (or $\prod_2$) if $\|\mathbf{x}_{ij} - \bm c_{1}  \|_2 < \|\mathbf{x}_{ij} - \bm c_{2}  \|_2$ (or $\|\mathbf{x}_{ij} - \bm c_{1}  \|_2 > \|\mathbf{x}_{ij} - \bm c_{2}  \|_2$) under the \emph{Cannot-Link} constraint; if the constraint is violated for some $i_*$, assign $i_*1$ to $\prod_1$ and $i_*2$ to $\prod_2$, or vice versa, equiprobably \;
\textsf{3.} For new clusters $\prod_1$ and $\prod_2$, update $\bm c_1$ and $\bm c_2$ \;
\textsf{4.} Iterate 2 and 3 until convergence \;
\textsf{5.} Calculate the test statistic $t = | \prod_1 \bigcap \mathcal{T} | $ \;
\textbf{Output:} The test statistic $t$ and reject Assumption \ref{assum: randomization inference} if $t < c_{\alpha/2}$ or $t > c_{1 - \alpha/2}$ where $c_{\alpha/2}$ and $c_{1-\alpha/2}$ are $\alpha/2$ and $1-\alpha/2$ quantiles of the Binomial$(I,1/2)$ distribution.
\caption{Pseudo Algorithm for Testing Assumption \ref{assum: randomization inference} using CBT }
\label{algo: test assumption 1}
\end{algorithm}

Algorithm \ref{algo: test assumption 1} formally states a two-sided falsification test of the randomization assumption \ref{assum: randomization inference} for matched pair data based on solving a constrained $2$-means clustering algorithm.

 Metric learning is a machine learning technique that automatically constructs task-specific distance metrics from supervised or unsupervised data. The learned distance metric can then be applied to a variety of tasks such as classification and clustering. To be more specific, we consider learning a distance metric of the following form (\citealp{xing2002distance}):

\begin{equation} 
\label{eq: metric}
    d(\bm x, \bm y) = d_{\mathbf{A}}(\bm x, \bm y) = \|\bm x - \bm y \|_{\mathbf{A}} = \sqrt{(\bm x - \bm y)^\top \mathbf{A} (\bm x - \bm y) }
\end{equation}
where $\mathbf{A} \succeq 0$. Setting $\mathbf{A}$ to $\mathbf{I}$ induces the Euclidean distance and $\mathbf{A}$ to the inverse of the variance-covariance matrix of $\mathbf{X}$ induces the Mahalanobis distance. One widely-used metric-learning strategy optimizes the distance metric $d(\boldsymbol x, \boldsymbol y)$ in \eqref{eq: metric} by maximizing the total distances between all ``dissimilar" pairs (i.e., units belonging to different clusters, i.e., units $i1$ and $i2$ in each matched pair $i$ in our case) while enforcing the total distances among ``similar" pairs (i.e., $(\mathbf{x}_l, \bm c_1)$ and $(\mathbf{x}_{l'}, \bm c_2)$ where $\mathbf{x}_l \in \prod_1, \mathbf{x}_{l'} \in \prod_2 $) to be small. The semi-definite programming problem below formalizes this metric-learning approach and learns a matrix $\boldsymbol A$ used to define the distance metric $d(\boldsymbol x, \boldsymbol y)$ in each iteration of the $K$-means clustering algorithm:
\begin{equation} \label{metric learning}
    \begin{split}
        &\max_{\mathbf{A}}~~ \sum_{ i = 1 }^{I} \|\mathbf{x}_{i1} - \mathbf{x}_{i2} \|_{\mathbf{A}}  \\
        \text{subject to}&  \sum_{ \mathbf{x}_l \in  \prod_1 } \|\mathbf{x}_l - \bm c_1  \|_{\mathbf{A}}^2 + \sum_{ \mathbf{x}_{l'} \in  \prod_2 } \|\mathbf{x}_{l'} - \bm c_2  \|_{\mathbf{A}}^2 \leq 1, \\
        &\quad \quad \quad \mathbf{A}  \succeq \mathbf{0},
    \end{split}
\end{equation}
where $\prod_1, \prod_2, \bm c_1, \bm c_2$ are obtained at each iteration of Algorithm \ref{algo: test assumption 1}. Two remarks follow. First, the choice of $1$ in the first constraint is arbitrary; changing it to any positive value $a$ results only in $\mathbf{A}$ being replaced by $a^2\mathbf{A}$. Second, changing $\sum_{ i = 1 }^{I} \|\mathbf{x}_{i1} - \mathbf{x}_{i2} \|_{\mathbf{A}}$ to $\sum_{ i = 1 }^{I} \|\mathbf{x}_{i1} - \mathbf{x}_{i2} \|^2_{\mathbf{A}}$ would result in $\mathbf{A}$ always being rank $1$ and hence not a good option.

To solve the optimization problem \eqref{metric learning}, define
\begin{align*}
    g(\mathbf{A}) = \sum_{ \mathbf{x}_l \in  \prod_1 } \|\mathbf{x}_l - \bm c_1  \|_{\mathbf{A}}^2 + \sum_{ \mathbf{x}_{l'} \in  \prod_2 } \|\mathbf{x}_{l'} - \bm c_2  \|_{\mathbf{A}}^2 - \log \left( \sum_{ i = 1 }^{I} \|\mathbf{x}_{i1} - \mathbf{x}_{i2} \|_{\mathbf{A}}   \right).
\end{align*}
Solving (\ref{metric learning}) is equivalent to minimize $g(\mathbf{A})$. For the case of diagonal matrix $\mathbf{A}$, we can apply the Newton-Raphson method; otherwise, the gradient descent and iterative projections would be better choices as Newton's method is prohibitively expensive (\citealp{boyd2004convex}).

A metric-learning-enhanced version of Algorithm \ref{algo: test assumption 1} replaces Step $2$ with the following:
\begin{enumerate}
    \item[$2^\ast.$] Learn the positive semi-definite matrix $\boldsymbol A$ by solving the optimization problem \eqref{metric learning}. Denote the associated distance metric by $\|\cdot\|_{\mathbf{A}}$. Assign $ij$ to $\prod_1$ (or $\prod_2$) if $\|\mathbf{x}_{ij} - \bm c_{1}  \|_{\mathbf{A}} < \|\mathbf{x}_{ij} - \bm c_{2}  \|_{\mathbf{A}}$ (or $\|\mathbf{x}_{ij} - \bm c_{1}  \|_{\mathbf{A}} > \|\mathbf{x}_{ij} - \bm c_{2}  \|_{\mathbf{A}}$) under the \emph{Cannot-Link} constraint; if the constraint is violated for some $i_*$, assign $i_*1$ to $\prod_1$ and $i_*2$ to $\prod_2$, or vice versa, equiprobably.
\end{enumerate}

We may impose further restrictions on matrix $\mathbf{A}$ to increase the interpretability of the distance metric $\|\cdot\|_\mathbf{A}$. Motivated by \citet{diamond2013genetic}, one may consider $\mathbf{A}$ of the following form:
\begin{align*}
    \mathbf{A} = (\mathbf{S}^{-1/2})^\top \mathbf{D} \mathbf{S}^{-1/2},  
\end{align*}
where $\mathbf{D}$ is a diagonal matrix to be learned and $\mathbf{S}^{1/2}$ is the Cholesky decomposition of the variance-covariance matrix $\mathbf{S}$. According to this specification, a large diagonal entry of $\mathbf{D}$ corresponds to the distance metric giving a heavy weight to the corresponding normalized covariate when performing the clustering, and should be taken as a signal of poor covariate balance of the corresponding normalized covariate.

Although we have focused on the metric learning algorithm from \citet{xing2002distance}, other metric learning algorithms such as information theoretic metric learning (\citealp{davis2007information}), sparse high-dimensional metric learning (\citealp{qi2009efficient}), and relative components analysis (\citealp{shental2002adjustment, bar2003learning, bar2005learning}) could also be fused with the $K$-means clustering algorithm to enhance the performance.

\section*{Supplementary Material D: Additional Simulation Results}
We describe additional simulation results concerning the performance of outcome analysis that takes into account RSVs. We considered the data generating process as detailed in Section \ref{sec: simulation} with $n = 3000$, $c = 0.3$, and the following nonlinear outcome models for $R_C$:
\[
R_C = \kappa_2 \times \textsf{sign}\{X_1\} \cdot |X_1|^{\kappa_1} + 0.5\sqrt{|X_2|} - X_3 + \epsilon,\qquad \epsilon \sim \text{Normal}(0, 1),
\] with different choices of $\kappa_1$ and $\kappa_2$, and let $\beta = 0$ so that $R_T = R_C$ for all study units and there is no treatment effect. We used the \textsf{senWilcox} function in the \textsf{R} package \textsf{DOS} to conduct outcome analysis for various matched datasets. Under a biased randomization scheme, i.e., Assumption 2 in the main article, the treatment effect is only partially identified, i.e., we obtain a partial identification interval rather than a point estimate, and this interval, unlike a confidence interval, does not shrink to $0$ even as sample size goes to infinity. Therefore, to still measure the performance of the partial identification interval, we define the ``bias" of the interval as the minimal distance between the estimand to the interval. For instance, if the partial identification interval is $[-0.02, 0.03]$, then the ``bias" is equal to $0$ because it contains the estimand $\beta = 0$. As another example, if the partial identification interval is $[0.02, 0.03]$, then the bias is defined to be $0.02$ as this is the minimal distance from $0$ to the interval $[0.02, 0.03]$. We also report the coverage probability of the confidence intervals that are well-defined for both under point ($\Gamma = 1$) and partial ($\Gamma > 1$) identification. 

Table \ref{tbl: RI and biased RI level} summarizes the bias and coverage probability of an outcome analysis that assumes randomization (i.e., $\Gamma = 1$) on matched-pair data, and an outcome analysis that takes into account the RSV $\underline{\widetilde{\Gamma}}$ obtained using $\textsf{SS-CPT}_{\textsf{pscore}}$ and conducts the outcome analysis under $\Gamma = \underline{\widetilde{\Gamma}}$, as advocated in the main article. We found several consistent trends. First, when $\kappa_1$ and $\kappa_2$ are not both $0$, there is selection bias and statistical matching helps adjust for covariate and remove overt bias. Optimal matching ($\mathcal{M}_{\textsf{opt}}$) seemed to consistently outperform the metric-based matching ($\mathcal{M}_{\textsf{maha}}$) and propensity score matching ($\mathcal{M}_{\textsf{pscore}}$), and removed most of the bias. The testing procedure $\textsf{SS-CPT}_{\textsf{pscore}}$ has little power against optimally-matched datasets (i.e., the RSV $\underline{\widetilde{\Gamma}}$ is only occasionally greater than $1$; see Table \ref{tbl: simulation results main article} in the main article), and the coverage probabilities under both  $\Gamma = 1$ and $\Gamma = \underline{\widetilde{\Gamma}}$ are similar and close to the nominal rate. Second, $\mathcal{M}_{\textsf{maha}}$ and $\mathcal{M}_{\textsf{pscore}}$ tended to remove less bias and therefore have poor coverage probability under $\Gamma = 1$; however, the partial identification interval obtained under $\Gamma = \underline{\widetilde{\Gamma}}$ in fact contains the true estimand $\beta = 0$ in almost all of simulated datasets, and the bias (defined as the distance from the estimand to the partial identification interval) becomes negligible. Moreover, the confidence intervals corresponding to the partial identification intervals always obtain nominal level, although very conservative, due to the conservative nature of partial identification and Rosenbaum bounds (i.e., Rosenbaum bounds is a worst-case $p$-value).

\begin{table}[ht]
\centering
\caption{Simulation results of performing the outcome analysis on matched data assuming randomization ($\Gamma = 1$) versus further taking into account the RSV ($\Gamma = \underline{\widetilde{\Gamma}}$).}
\label{tbl: RI and biased RI level}
\begin{tabular}{ccccccc}
  \hline
 \multirow{3}{*}{\begin{tabular}{c}$\kappa_2$ \end{tabular}} & \multirow{3}{*}{\begin{tabular}{c}$\kappa_1$ \end{tabular}} & \multirow{3}{*}{\begin{tabular}{c}$\mathcal{M}$ \end{tabular}} & \multirow{3}{*}{\begin{tabular}{c}$100\times$ Bias\\ $\Gamma = 1$\end{tabular}} &
 \multirow{3}{*}{\begin{tabular}{c}$100 \times$ Bias\\ $\Gamma = \underline{\widetilde{\Gamma}}$\end{tabular}} &
 \multirow{3}{*}{\begin{tabular}{c}Coverage \\ $\Gamma = 1$ \end{tabular}}  & \multirow{3}{*}{\begin{tabular}{c}Coverage \\ $\Gamma = \underline{\widetilde{\Gamma}}$ \end{tabular}}   \\ \\ \\
  \hline
   0.10 & 0.00 & $\mathcal{M}_{\textsf{opt}}$ & 2.06 & 2.03 & 94.6\% & 94.6\% \\ 
  0.10 & 0.00 & $\mathcal{M}_{\textsf{maha}}$ & 2.70 & 0.15 & 89.6\% & 99.8\% \\ 
  0.10 & 0.00 & $\mathcal{M}_{\textsf{pscore}}$ & 2.99 & 0.04 & 93.4\% & 100\% \\ 
    0.10 & 0.10 & $\mathcal{M}_{\textsf{opt}}$ & 1.90 & 1.89 & 93.8\% & 93.8\% \\ 
  0.10 & 0.10 & $\mathcal{M}_{\textsf{maha}}$ & 2.69 & 0.13 & 90.6\% & 100\% \\ 
  0.10 & 0.10 & $\mathcal{M}_{\textsf{pscore}}$ & 2.23 & 0.03 & 94.4\% & 100\% \\ 
   0.10 & 0.20 & $\mathcal{M}_{\textsf{opt}}$ & 2.22 & 2.20 & 92.8\% & 92.8\% \\ 
  0.10 & 0.20 & $\mathcal{M}_{\textsf{maha}}$ & 2.70 & 0.11 & 93.4\% & 100\% \\ 
  0.10 & 0.20 & $\mathcal{M}_{\textsf{pscore}}$ & 2.76 & 0.02 & 93.8\% & 100\% \\ 
   0.20 & 0.00 & $\mathcal{M}_{\textsf{opt}}$ & 1.96 & 1.96 & 95.2\% & 95.2\% \\ 
  0.20 & 0.00 & $\mathcal{M}_{\textsf{maha}}$ & 4.09 & 0.22 & 85.0\% & 99.4\% \\ 
  0.20 & 0.00 & $\mathcal{M}_{\textsf{pscore}}$ & 5.07 & 0.17 & 89.4\% & 100\% \\ 
    0.20 & 0.10 & $\mathcal{M}_{\textsf{opt}}$ & 2.05 & 2.06 & 94.0\% & 94.0\% \\ 
  0.20 & 0.10 & $\mathcal{M}_{\textsf{maha}}$ & 3.73 & 0.18 & 88.6\% & 99.8\% \\ 
  0.20 & 0.10 & $\mathcal{M}_{\textsf{pscore}}$ & 5.10 & 0.08 & 88.8\% & 100\% \\ 
   0.20 & 0.20 & $\mathcal{M}_{\textsf{opt}}$ & 2.42 & 2.40 & 92.6\% & 92.6\% \\
  0.20 & 0.20 & $\mathcal{M}_{\textsf{maha}}$ & 3.66 & 0.23 & 88.4\% & 99.4\% \\ 
  0.20 & 0.20 & $\mathcal{M}_{\textsf{pscore}}$ & 5.20 & 0.10 & 87.4\% & 100\% \\
   \hline
\end{tabular}
\end{table}

%\begin{figure}[ht]
%    \centering
%    \includegraphics[width = \textwidth]{}
%    \caption{Top panel: confidence intervals under $\Gamma = 1$ for the first $50$ propensity-score-matched datasets. Bottom panel: confidence intervals under $\Gamma = \underline{\widetilde{\Gamma}}$ for the same matched datasets.}
%    \label{fig: coverage plots}
%\end{figure}

\end{document}